\documentclass[%
reprint,
aps,
mph,
superscriptaddress,
nofootinbib,
nobibnotes,
amsmath,amssymb,
longbibliography
]{revtex4-1}

\usepackage{xcolor}
\usepackage{tabularx}   
\usepackage{makecell}   
\usepackage{array} 
\usepackage{rotating}
\usepackage{graphicx}
\usepackage{dcolumn}
\usepackage{bm}

\newcolumntype{Y}{>{\centering\arraybackslash}X}
\newtheorem{theorem}{Theorem}[section]

\newtheorem{lemma}{Lemma}

\newenvironment{proof}{\par\noindent\textbf{Proof.}\ }{\hfill$\square$\par}

\begin{document}

\newcommand{\eg}{{\it e.g.}}
\newcommand{\ie}{{\it i.e.}}

\preprint{APS/123-QED}

\title{Sperner state and multipartite entanglement signals}

\author{Xin-Xiang Ju}
\email{juxinxiang21@mails.ucas.ac.cn}
 \affiliation{School of Physical Sciences, University of Chinese Academy of Sciences, Zhongguancun east road 80, Beijing 100190, China}

\author{Ya-Wen Sun}
 \email{yawen.sun@ucas.ac.cn}
 \affiliation{School of Physical Sciences, University of Chinese Academy of Sciences, Zhongguancun east road 80, Beijing 100190, China}%
\affiliation{Kavli Institute for Theoretical Sciences, University of Chinese Academy of Sciences, Beijing 100049, China}%

\author{Yang Zhao}
\email{zhaoyang20a@mails.ucas.ac.cn}
 \affiliation{School of Physical Sciences, University of Chinese Academy of Sciences, Zhongguancun east road 80, Beijing 100190, China}

\date{\today}

\begin{abstract} 

We establish a systematic classification scheme for multipartite entanglement structures. We define Sperner states\textemdash a broad class of states where apparent multipartite entanglement decomposes into fewer-partite entanglement among subsystems of each party. Each class of Sperner states is associated with one antichain hypergraph and each hypergraph encodes the maximal entanglement structure permissible under its constraints. We introduce a Multi-entanglement Measure Space (MEMS) where each Sperner class corresponds to a linear subspace defined by the vanishing of specific linear combinations of bipartite and multipartite measures. The nonvanishing of such combinations signals multipartite entanglement beyond the associated hypergraph, thereby distinguishing entanglement structures. We build a two way connection between each hypergraph entanglement structure and a distinct set of combinations, thereby quantifying the entanglement pattern and providing a unified basis for classifying all multipartite entanglement.

\end{abstract}

\maketitle

\section{Introduction}

Multipartite entanglement plays a central role across a wide range of modern physics, from quantum information and quantum many-body physics, to quantum gravity. Yet it remains notoriously difficult to define and diagnose\cite{Xie:2021hsy,Ma:2023ecg}. Although many multipartite entanglement measures have been proposed in recent years\cite{Umemoto:2018jpc,Bao:2018fso,Yang:2009ywa,Ju:2024hba,Ju:2025tgg}, including multi-entropy \cite{Gadde:2022cqi,Penington:2022dhr,Gadde:2023zzj}, a fundamental conceptual question persists: what concrete entanglement structures do these measures actually detect?

In this work, we bridge this gap by developing a comprehensive framework where linear combinations of multipartite entanglement measures act as precise detectors of entanglement structural patterns. We classify all 
$n$-party states by whether and how more-partite entanglement emerges from fewer-partite entanglement within smaller subsystems of each party, rather than as an irreducible global structure. Each such entanglement pattern can be mapped to a hypergraph satisfying an antichain condition, establishing what we term Sperner states. The key insight is that for each hypergraph, a characteristic set of linear combinations of  measures vanishes; conversely, non-vanishing values signal the presence of genuine multipartite entanglement
  beyond the structure captured by that hypergraph. This establishes a rigorous connection between linear combinations of multipartite entanglement measures and the precise entanglement structures they detect, providing a systematic framework for classifying the full landscape of multipartite entanglement.

\section{Sperner state and MEMS}\label{sec2}

{ {The construction of our general formalism for multipartite entanglement begins by classifying the states that \emph{lack} genuine irreducible multipartite entanglement. We define these as Sperner states, of which each class is associated with a hypergraph $H=(V,E)$, where 
$V=\{A,B,\dots,N\}$ denotes the set of $n$ physical subsystems (vertices), and 
$E\subseteq 2^V$ is a collection of hyperedges, with each $e\in E$ representing an arbitrary multipartite resource shared among the parties in $e$. The Sperner state class associated with $H$ is then defined as
\begin{equation}\label{def}
\rho_H=\Big(\otimes_{I\in V} U_{I}\Big)
\Big(\otimes_{e\in E}\rho_e\Big)
\Big(\otimes_{I\in V}U_{I}^\dagger\Big),
\end{equation}
where $U_I$ is an arbitrary local unitary acting on subsystem $I\in V$. Each $\rho_e$ is an arbitrary (pure or mixed) state supported on the subsystems in $e$ and may possess any form of multipartite entanglement. Before the unitary transformation, $\otimes_{e\in E}\rho_e$ is often considered as a freely available resource in state preparation \cite{Kraft:2021lzl,Hansenne:2023sze,Navascues:2020uwz,Luo:2020ugi}.
Crucially, when a subsystem $I$ participates in multiple hyperedges, its Hilbert space is understood to decompose into a tensor product of smaller factors $\mathcal{H}_I \cong \bigotimes_{e \ni I} \mathcal{H}_{I_e}$, with each factor appearing in exactly one $\rho_e$. 
In this way, different hyperedges act on disjoint parts of the same physical subsystem, and together prepare the full system. 

In this framework, any apparent $n$-partite entanglement is entirely reducible to the correlations shared within the smaller groups defined by $E$. Since multipartite entanglement is invariant under local unitaries, each hypergraph provides a faithful representation of the entanglement structures permitted in $\rho_H$: 
all entanglement is confined within hyperedges of this hypergraph, and no irreducible global $n$-partite resource is present beyond them. 

For instance, the ``triangle state"\cite{Zou:2020bly} $\rho_{\Delta}$ is associated with a hypergraph with edges $e_1=\{A,B\}$, $e_2=\{B,C\}$, and $e_3=\{C,A\}$. 
Its structure, \begin{scriptsize}
    \begin{equation}\label{eq:triangle-state}
\rho_{\Delta }
=(U_A\otimes U_B\otimes U_C)
\rho_{A_L B_R}\otimes\rho_{B_L C_R}\otimes\rho_{C_L A_R}(U_A^\dagger\otimes U_B^\dagger\otimes U_C^\dagger),
\end{equation}
\end{scriptsize}
explicitly shows how each subsystem is split to participate in separate entanglement patterns. {In the quantum network\cite{kimble2008quantum,ladd2010quantum} literature, this triangle state is preparable from the independent triangle network (ITN) \cite{Navascues:2020uwz}, and from the triangle network with local unitaries (UTN) \cite{Hansenne:2023sze}.  }

Since {each hyperedge represents the most general quantum state shared by the parties, }
adding an additional hyperedge supported on a \emph{strict subset} of an existing hyperedge is operationally redundant: it can always be absorbed into {the larger hyperedge}. 
This leads to our first (and only) structural rule:
\begin{quote}
\emph{No hyperedge may be contained in another.}
\end{quote}
Equivalently, $(E,\subseteq)$ forms an \emph{antichain}. 
This is precisely the same condition that appears in the classical Sperner theorem \cite{sperner1928satz}, which motivates our choice of the name, distinguishing it from other hypergraph-state constructions in the existing literature\cite{Xu:2023eyn,Rossi:2013rko,Bao:2020zgx,Hubeny:2024fjn}
\footnote{In quantum network studies, it is most common to apply local operations and shared randomness (LOSR)\cite{Navascues:2020uwz,Luo:2020ugi} on product states $\otimes_{e\in E}\rho_e$, whereas we consider local unitary (LU) transformation instead.}.

By design, Sperner states provide a rigorous taxonomy for multipartite entanglement and our central goal is therefore to distinguish these hypergraph multipartite structures, which requires appropriate measures. In this work, we move beyond the traditional debate over which specific multipartite measure is superior. Instead, we demonstrate that any sequence of bipartite and multipartite measures $E^{(m)} = E^{(m)}_{(A_1|\cdots|A_m)},\quad m=2,3,\dots,n,$ can be utilized, provided they satisfy a set of minimal consistency rules. Our key insight is that the specific multipartite entanglement structure of a hypergraph is not captured by a single measure, but is instead encoded in certain linear combinations of measures. These combinations serve as sharp ``structural signals," whose vanishing precisely signals the entanglement pattern associated with a given hypergraph, thereby distinguishing different multipartite entanglement structures in a unified way.

The four fundamental rules required on the measures $E^{(m)}$ are as follows.

\begin{itemize}
    \item \textbf{Additivity:} for product states $\rho\otimes\sigma$,
    \begin{equation}
       E^{(m)}(\rho\otimes\sigma)=E^{(m)}(\rho)+E^{(m)}(\sigma). 
    \end{equation}
    \item \textbf{LU invariance:} under local unitaries $U_{A_i}$,
    \begin{footnotesize} \begin{equation} 
        E^{(m)}(\rho)=E^{(m)}\bigl((U_{A_1}\otimes\cdots\otimes U_{A_m})\rho(U_{A_1}\otimes\cdots\otimes U_{A_m})^\dagger\bigr).
    \end{equation}
    \end{footnotesize}
    \item \textbf{Reducibility:} $E^{(1)}:=0$. If the global state factorizes as $\rho=\rho_{A_1\cdots A_{m-1}}\otimes \rho_{A_m}$, then
    \begin{equation}
        E^{(m)}_{(A_1|\cdots|A_{m-1}|A_m)}(\rho)=E^{(m-1)}_{(A_1|\cdots|A_{m-1})}(\rho_{A_1\cdots A_{m-1}}).
    \end{equation}
    \item \textbf{Symmetry:} $E^{(m)}$ is invariant under permutations of the parties,
    \begin{equation}
        E^{(m)}_{(\dots|A_i|\dots|A_j|\dots)}=E^{(m)}_{(\dots|A_j|\dots|A_i|\dots)}.
    \end{equation}
\end{itemize}
To our knowledge, essentially all commonly used bipartite entanglement measures and their multipartite generalizations satisfy these four properties. 
Examples include R\'enyi entropies and their multipartite generalizations  \cite{Gadde:2022cqi,Penington:2022dhr,Gadde:2023zzj}, entanglement of purification and its multipartite version \cite{Takayanagi:2017knl,Umemoto:2018jpc,Bao:2018fso}, squashed entanglement \cite{Christandl:2003med,Ju:2023dzo} and its multipartite extension \cite{Yang:2009ywa} etc.. 
Some measures are defined for pure states, while others extend naturally to mixed states. 
We will therefore select measures appropriate to the states under consideration, keeping our framework as general as possible.

The measures $\{E^{(m)}\}_{m=2}^n$ satisfying the four postulates could be applied to any nontrivial\footnote{Here nontrivial means that the partition in which all vertices are grouped into a single block is excluded.} partition $\ (A\dots|\dots|\dots N)$ of $V$. We denote the partition by $\pi$, and we can relabel $\{E^{(m)}\}_{m=2}^n$ by the corresponding $\pi$ and denote them as $E_\pi^{(|\pi|)}$.
However, individual measures are generally insufficient to isolate specific entanglement structures. Their true power emerges only when their collective behavior is considered. This motivates introducing the following space, namely the  \emph{Multi-entanglement Measure Space} (MEMS), which collects the full set of entanglement data.

\noindent\textbf{Definition 1 (Multi-entanglement Measure Space).}
The \emph{Multi-entanglement Measure Space} (MEMS) is the real vector space spanned by these coordinates,
\begin{equation}
    \mathrm{MEMS}_n := \mathrm{span}_{\mathbb{R}}\{E_\pi^{(|\pi|)}:\ \pi\in\Pi^\ast(V)\}.
\end{equation}
where $|\pi|$ is the number of partition blocks in $\pi$, and $\Pi^\ast(V)$ is the set of all nontrivial partitions of $V$.
Each quantum state $\rho$ on $V$ is {mapped to} a point in the MEMS
\begin{equation}
    \mathbf{E}(\rho):=\bigl(E_\pi^{(|\pi|)}(\rho)\bigr)_{\pi\in\Pi^\ast(V)}\in \mathrm{MEMS}_n .
\end{equation}

This definition naturally generalizes the usual entropy space \cite{zhang1997non,Pippenger:2003wky,Bao:2015bfa} by including multipartite as well as non-entropic quantities.
It is essential to include multipartite measures, since bipartite measures alone are generally insufficient to distinguish different Sperner classes, as we will explain in the next section.

{The dimensionality of $\text{MEMS}_n$ is determined by the number of distinct nontrivial partitions of the $n$-partite system. Each such partition gives a corresponding $E_\pi^{(|\pi|)}(\rho)$ that serves as a coordinate in this space. Since the total number of partitions of an $n$-element set is given by the Bell number $B_n$, removing the trivial partition reduces the dimension of the MEMS space to}
\begin{equation}
    \dim(\mathrm{MEMS}_n)=B_n-1.
\end{equation}

\section{Geometric correspondence of a Sperner state class in \textbf{MEMS}}

{We want to map each hypergraph into a geometric object in MEMS, so that from the position in MEMS, we can tell the entanglement structure of a given state. The question is therefore}: \emph{What geometric object $G_H$ do the Sperner states of a given hypergraph} $H$ correspond to in MEMS? In other words, {what relations among the coordinates $\{E_\pi^{(|\pi|)}\}_{\pi\in\Pi^\ast(V)}$ must hold for states of a given Sperner hypergraph?}
We start from two trivial examples:
\begin{enumerate}
    \item A hypergraph consisting of a single hyperedge $e=V$ (i.e., $|e|=n$), which correlates all parties.
    \item A hypergraph with no hyperedges.
\end{enumerate}
The first case corresponds to a completely general $n$-party state, hence it can in principle realize any point in MEMS.\footnote{We do not assume additional properties of $E_\pi^{(|\pi|)}$ such as positivity, monogamy, or entropy inequalities. Therefore, unlike the entropy cone\cite{zhang1997non,Pippenger:2003wky,Bao:2015bfa}, MEMS does not carry extra convex-cone structure by default.}
The second case contains no entangling operation of any kind; by additivity and reducibility, every $E_\pi^{(|\pi|)}$ must vanish, so this state corresponds to the origin $\mathbf{0}\in \mathrm{MEMS}_n$.

These examples suggest that some Sperner states are ``more general'' than others. 
More precisely, given two hypergraphs on the same vertex set $V$, say $H_1=(V,E_1)$ and $H_2=(V,E_2)$, 
if for any hyperedge $e_{i_2}$ in $H_2$, one can always find another hyperedge  $e_{i_1}\supset e_{i_2}$ in $H_2$, we write $H_1\succeq H_2$ (equivalently, $E_2$ is ``covered'' by $E_1$), meaning that the Sperner states associated with $H_1$ are more general and allow more kinds of entanglement structures.
Correspondingly, in MEMS, the geometric objects should satisfy the same containment relation: $G_{H_2}$ should be contained in $G_{H_1}$.

For other classes of Sperner states, a specific family of linear equalities can be identified.
Let us take the triangle state $\rho_{\Delta ABC}$ in \eqref{eq:triangle-state} as an example. 
By the \emph{reducibility} and \emph{additivity} properties of the multipartite measures, we obtain
\begin{scriptsize}
\begin{equation}\label{eq:triangle-macro-micro}
\begin{aligned}
E^{(2)}_{(A|BC)}\big|_{\rho_{\Delta}}
&=E^{(2)}\big|_{\rho_{A_LB_R}}+E^{(2)}\big|_{\rho_{C_LA_R}},\\
E^{(2)}_{(B|CA)}\big|_{\rho_{\Delta }}
&=E^{(2)}\big|_{\rho_{B_LC_R}}+E^{(2)}\big|_{\rho_{A_LB_R}},\\
E^{(2)}_{(C|AB)}\big|_{\rho_{\Delta }}
&=E^{(2)}\big|_{\rho_{C_LA_R}}+E^{(2)}\big|_{\rho_{B_LC_R}},\\
E^{(3)}_{(A|B|C)}\big|_{\rho_{\Delta }}
&=E^{(2)}\big|_{\rho_{A_LB_R}}
+E^{(2)}\big|_{\rho_{B_LC_R}}
+E^{(2)}\big|_{\rho_{C_LA_R}}.
\end{aligned}
\end{equation}
\end{scriptsize}
It then follows immediately that
\begin{tiny}
\begin{equation}\label{eq:triangle-linear-relation}
E^{(3)}_{(A|B|C)}\big|_{\rho_{\Delta}}
-\frac12\left(
E^{(2)}_{(A|BC)}\big|_{\rho_{\Delta}}
+E^{(2)}_{(B|AC)}\big|_{\rho_{\Delta}}
+E^{(2)}_{(C|AB)}\big|_{\rho_{\Delta}}
\right)=0.
\end{equation}
\end{tiny}

{As illustrated by the triangle state, entanglement structures are determined by the correlations among subsystems of the $n$ parties (e.g., $A_L, A_R$). Since physical observables must be those of each whole party due to LU invariance, these internal relations must be embodied as constraints on global measures, which manifest as vanishing conditions of specific linear combinations.
}

We now generalize this procedure to an arbitrary class of Sperner states associated with a Sperner hypergraph
$H=(V,E)$ where $|V|=n$. 
For clarity, we distinguish two types of quantities:
\begin{itemize}
    \item \textbf{Macroscopic quantities:} the measures $E_\pi^{(|\pi|)}(\rho)$ evaluated on the \emph{entire} $n$-party state {treating each party as a whole},
    where $\pi\in\Pi^\ast(V)$ labels a nontrivial set partition of $V$. {These quantities span the MEMS. With $n$ fixed, macroscopic quantities are the same for all hypergraphs.}
    \item \textbf{Microscopic quantities:} the measures evaluated on the factor states {$\rho_e$} in (\ref{def}). 
    Concretely, for each hyperedge $e\in E$ and each $\pi_e\in\Pi^\ast(e)$, we include the quantity
    $E_{\pi_e}^{(|\pi_e|)}(\rho_e)$, where $\rho_e$ denotes the (generally mixed) state supported on the subsystems in $e$. {These quantities could involve subsystems of each party, and the number of them depends on the structures of specific hypergraphs.}
\end{itemize}

{In summary, macroscopic quantities are for $n$ parties e.g., $A$, $B$.., while microscopic quantities are for subsystems of each party, e.g., $A_L$, $B_R$, etc..} Each microscopic quantity is indexed by a pair $(e,\pi_e)$ with $e\in E$ and $\pi_e\in\Pi^\ast(e)$.
Hence the total number of microscopic coordinates is $\sum_{e\in E}\bigl(B_{|e|}-1\bigr),$
and, because each $\rho_e$ an arbitrary state, we treat these microscopic quantities as algebraically independent parameters.

Define the \emph{macroscopic vector} and the \emph{microscopic vector} by
\begin{align}
   &\mathbf{V}^{\mathrm{macro}}
:=
\bigl(E_\pi^{(|\pi|)}(\rho)\bigr)_{\pi\in\Pi^\ast(V)},\notag\\
&\mathbf{V}^{\mathrm{micro}}
:=
\bigl(E_{\pi_e}^{(|\pi_e|)}(\rho_e)\bigr)_{(e,\pi_e):\ e\in E,\ \pi_e\in\Pi^\ast(e)}.
\end{align}
Then, equation \eqref{eq:triangle-macro-micro} is an instance of a general linear relation
\begin{equation}\label{eq:R-mapping}
\mathbf{V}^{\mathrm{macro}} = R(H)\mathbf{V}^{\mathrm{micro}},
\end{equation}
where $R(H)$ is a $0$--$1$ valued matrix determined entirely by the hypergraph structure and the reduction rule below.

\noindent\textbf{Definition 2: Partition-reduction Matrix $R(H)$.} 

Fix a macroscopic partition $\pi\in\Pi^\ast(V)$ and a microscopic label $(e,\pi_e)$ with $e\in E$ and $\pi_e\in\Pi^\ast(e)$.
Let $\pi|_e\in\Pi(e)$ denote the restriction of $\pi$ to $e$, obtained by taking the intersection of each block of $\pi$ with $e$
and deleting empty intersections.
Then we define
\begin{equation}\label{eq:R-entry}
R(H)_{\pi,(e,\pi_e)}
:=
\begin{cases}
1, & \text{if }\pi|_e=\pi_e,\\
0, & \text{otherwise}.
\end{cases}
\end{equation}
Equivalently, $R(H)_{\pi,(e,\pi_e)}=1$ if and only if the multipartition $\pi$ on $V$,
when reduced to the subsystems inside $e$ (by eliminating all parties outside)
matches exactly the microscopic multipartition $\pi_e$.

Given the linear map \eqref{eq:R-mapping}, all linear equalities among the macroscopic coordinates
$\mathbf{V}^{\mathrm{macro}}$ follow from elementary linear algebra: they are precisely the linear relations in the left nullspace
of $R(H)$.
In Appendix \ref{appD} we list several explicit hypergraph examples, together with their corresponding matrices $R(H)$
and the complete set of linear equalities satisfied by $\mathbf{V}^{\mathrm{macro}}$.

In $\mathrm{MEMS}_n$, each linear equality among the coordinates $\{E_\pi^{(|\pi|)}\}_{\pi\in\Pi^\ast(V)}$ defines a codimension-one linear hyperplane passing through the origin.\footnote{This geometric viewpoint is exactly analogous to that of the entropy space (and entropy cone) literature \cite{zhang1997non,Pippenger:2003wky,Bao:2015bfa}.} 
For a fixed Sperner hypergraph $H=(V,E)$, the corresponding class of Sperner states is therefore represented by a linear subspace $G_H$ obtained as the intersection of all such hyperplanes that the states must reside in. 
The dimension of this object is an important quantity: it measures how general the Sperner class is.
By definition, the dimension of $G_H$ equals the number of linearly independent macroscopic quantities compatible with $H$, given by the rank of its partition-reduction matrix $R(H)$.

However, evaluating $\operatorname{rank}_{\mathbb{R}}\bigl(R(H)\bigr) $ by explicitly constructing the matrix $R(H)$ is computationally expensive when $n$ is large, and it also obscures the goal of reading the entanglement structure directly from the hypergraph. 
We therefore seek a closed-form expression for $\operatorname{rank}_{\mathbb{R}}\bigl(R(H)\bigr) $ in terms of the overlap structure of hyperedges.

\begin{theorem}\label{thm1}
The rank of the partition-reduction matrix equals the following inclusion-exclusion quantity:
\begin{equation}
    \operatorname{rank}_{\mathbb{R}}\bigl(R(H)\bigr)= \sum_{\emptyset\neq F\subseteq E} (-1)^{|F|+1}\bigl(B_{k(F)}-1\bigr), 
\end{equation} where $k(F):=|\bigcap_{e_i\in F}e_i|$ counts the number of vertices in the common intersection of the hyperedges in $F$. 
\end{theorem}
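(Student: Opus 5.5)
The plan is to identify the column space of $R(H)$ with a sum of function spaces, one for each hyperedge. I will then find a single basis adapted to all of these spaces at once, so that the inclusion--exclusion formula becomes plain counting of sets. Columns of $R(H)$ are vectors indexed by $\pi\in\Pi^\ast(V)$, which I view as real functions on $\Pi(V)$ that vanish at the top partition $\hat 1_V$, the single-block partition. The column $(e,\pi_e)$ is the indicator $\pi\mapsto[\pi|_e=\pi_e]$. Hence the span of the columns belonging to a fixed hyperedge $e$ is
\[
W_e:=\{\,\pi\mapsto g(\pi|_e)\;:\; g:\Pi(e)\to\mathbb{R},\ g(\hat 1_e)=0\,\},
\]
and $\operatorname{rank}_{\mathbb{R}}R(H)=\dim\sum_{e\in E}W_e$. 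Dimensions of sums of subspaces do not obey inclusion--exclusion in general. It does hold if there is one linearly independent set $\mathcal{B}$ and subsets $\mathcal{B}_S\subseteq\mathcal{B}$, one for each $S\subseteq V$, with three properties: $\mathcal{B}_e$ is a basis of $W_e$; $\mathcal{B}_S\cap\mathcal{B}_T=\mathcal{B}_{S\cap T}$; and $|\mathcal{B}_S|=B_{|S|}-1$. In that case $\dim\sum_e W_e=|\bigcup_e\mathcal{B}_e|$, and ordinary set inclusion--exclusion gives exactly $\sum_{\emptyset\ne F\subseteq E}(-1)^{|F|+1}(B_{k(F)}-1)$.

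For the basis, let $\sigma$ range over ``partial partitions'' of $V$: partitions of a subset $T=\operatorname{supp}\sigma\subseteq V$ in which every block has size at least $2$. For each such $\sigma$ put $u_\sigma(\pi)=1$ if every block of $\sigma$ lies inside a single block of $\pi$, and $u_\sigma(\pi)=0$ otherwise. I then define $v_\sigma:=u_\sigma-1$ for $\sigma\neq\emptyset$, and $\mathcal{B}_S:=\{v_\sigma:\emptyset\ne\sigma,\ \operatorname{supp}\sigma\subseteq S\}$.

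The facts to check are as follows.
\begin{enumerate}
\item Each $v_\sigma$ vanishes at $\hat 1_V$. When $\operatorname{supp}\sigma\subseteq S$, $v_\sigma(\pi)$ depends only on $\pi|_S$, because membership of two elements of $S$ in a common block is read off from $\pi|_S$. Hence $\mathcal{B}_e\subseteq W_e$.
\item Partial partitions of a $k$-set with blocks of size at least $2$ are in bijection with $\Pi(k)$: adjoin the missing elements as singletons. So $|\mathcal{B}_S|=B_{|S|}-1$.
\item The intersection property $\mathcal{B}_S\cap\mathcal{B}_T=\mathcal{B}_{S\cap T}$ is immediate from the support condition.
\item $\dim W_e=B_{|e|}-1$. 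This needs the restriction map $\Pi^\ast(V)\to\Pi(e)\setminus\{\hat 1_e\}$ to be onto. That is clear: any non-top partition of $e$ already makes $\pi$ nontrivial once the vertices outside $e$ are adjoined, for instance as singletons. If the $v_\sigma$ are independent, then $\mathcal{B}_e$ is a basis of $W_e$ by this dimension count.
\end{enumerate}

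The main step is the global linear independence of $\{v_\sigma\}$, and I will prove it in two parts. First, identify $\sigma$ with its completion $\hat\sigma\in\Pi(V)$. Then $u_\sigma(\pi)=[\hat\sigma\le\pi]$ in the refinement order, so $\{u_\sigma\}_{\text{all }\sigma}$ is the zeta matrix of the lattice $\Pi(V)$. This matrix is unitriangular, hence invertible, so the $u_\sigma$ form a basis of all functions on $\Pi(V)$; here $u_\emptyset\equiv1$. Second, pass from $u_\sigma$ to $v_\sigma=u_\sigma-u_\emptyset$. This is an invertible change of basis, so $\{u_\emptyset\}\cup\{v_\sigma\}_{\sigma\neq\emptyset}$ is again a basis. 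Therefore the $v_\sigma$ are independent and span exactly the functions vanishing at $\hat 1_V$, and restricting them to $\Pi^\ast(V)$ loses nothing.

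Combining the four facts with this independence, $\sum_eW_e$ has basis $\bigcup_e\mathcal{B}_e$. Moreover $\bigcap_{e\in F}\mathcal{B}_e=\mathcal{B}_{\bigcap_{e\in F}e}$ has cardinality $B_{k(F)}-1$, which is $0$ when $k(F)\le1$, consistent with $B_0=B_1=1$. Inclusion--exclusion then gives Theorem~\ref{thm1}.
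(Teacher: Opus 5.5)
Your proof is correct, and it takes a genuinely different route from the paper's.

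\textbf{How the paper proceeds.} The paper keeps the separate indicator bases $\{\mathbf{1}_{S,\pi}\}_{\pi\in\Pi^\ast(S)}$ of the individual spaces $U_S$. These bases are not subsets of one common independent set, because $\mathbf{1}_{T,\tau}$ for $T\subset S$ is a sum of $\mathbf{1}_{S,\pi}$'s. So the paper has to prove two lattice identities by hand: $U_S\cap U_T=U_{S\cap T}$ (Lemma 1) and $U_S\cap\sum_i U_{T_i}=\sum_i U_{S\cap T_i}$ (Lemma 2). It does this through the intrinsic characterization ``depends only on the restriction to $S$ and vanishes when that restriction is trivial,'' together with explicit gluing of partitions. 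It then inducts on the number of hyperedges using $\dim(A+B)=\dim A+\dim B-\dim(A\cap B)$.

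\textbf{How your route differs.} You pass once and for all to the zeta basis $u_\sigma(\pi)=[\hat\sigma\le\pi]$ of the partition lattice, shifted by the constant function. In that basis every $W_S$ is a coordinate subspace spanned by $\{v_\sigma:\operatorname{supp}\sigma\subseteq S\}$. Spans of subsets of one independent set intersect and add exactly as the subsets do. So the distributivity the paper works to establish is automatic, and the theorem becomes literal set inclusion--exclusion.

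\textbf{What each route buys.} Yours is shorter and explains why an inclusion--exclusion formula can hold for these subspaces at all. It also gives the paper's later lemmas almost for free. Lemma 4 is immediate. Lemma 3 follows because $v_{\{S\}}\in\mathcal{B}_S$ lies in no $\mathcal{B}_T$ with $T\subsetneq S$. In fact the paper's test vector $w(\pi)=(-1)^{s-|\pi|}(|\pi|-1)!$ is, up to an overall sign, the M\"obius function $\mu(\pi,\hat 1_S)$. After the identification $\Phi_S$, this is exactly the coordinate functional extracting the $v_{\{S\}}$-component in your basis. The paper's Stirling-number computation is therefore a hand verification of what your triangularity argument gives structurally. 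In exchange, the paper's route is fully elementary and never invokes the zeta matrix.

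\textbf{Two small points.} Your fact 4 is not needed. $W_e$ is spanned by $B_{|e|}-1$ columns, so the independent set $\mathcal{B}_e\subseteq W_e$ of that size is automatically a basis. Also, as written, ``the restriction map $\Pi^\ast(V)\to\Pi(e)\setminus\{\hat 1_e\}$'' does not map into that set. What you mean is that every non-top partition of $e$ is the restriction of some element of $\Pi^\ast(V)$.
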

\textbf{Proof:} See Appendix \ref{appA}.

Naively, one might expect that each microscopic term arising from a hyperedge should provide an independent free parameter of the Sperner states, thereby contributing directly to the dimension of the corresponding geometric object in $\mathrm{MEMS}_n$. 
However, this naive expectation fails whenever two or more hyperedges overlap: overlaps create linear dependencies among the macroscopic coordinates induced by different hyperedges. 
The correct counting is governed by an inclusion-exclusion principle, which connects the overlap pattern of hyperedges directly to the generality (codimension) of the Sperner states.

A second foundational question is whether the map from a hypergraph to its corresponding geometric object $H\mapsto G_H$ is injective. In other words, we do not want two different Sperner hypergraphs to impose exactly the same set of equalities; otherwise, $\mathrm{MEMS}_n$ would fail to distinguish different multipartite entanglement patterns. 
Fortunately, the answer is affirmative. 

\begin{theorem}\label{thm2}
(uniqueness of equalities) There do not exist two different Sperner hypergraphs that share exactly the same set of linear equalities in $\mathrm{MEMS}_n$.
\end{theorem}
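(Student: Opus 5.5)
We will prove Theorem~\ref{thm2} by showing that the subspace $G_H=\operatorname{col}(R(H))\subseteq\mathrm{MEMS}_n$ determines $H$. Since the set of linear equalities satisfied by states of class $H$ is the left nullspace of $R(H)$, and the left nullspace determines the column space as its annihilator, it suffices to show that $H_1\neq H_2$ implies $\operatorname{col}R(H_1)\neq\operatorname{col}R(H_2)$.

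\emph{Step 1: an explicit spanning set for $G_H$.} For each hyperedge $e$ and each $\pi_e\in\Pi^\ast(e)$, the column of $R(H)$ indexed by $(e,\pi_e)$ is the indicator vector $\chi_{e,\pi_e}$ of the set $\{\pi\in\Pi^\ast(V):\pi|_e=\pi_e\}$. We will introduce, for any subset $S\subseteq V$, the subspace
\begin{equation}
W_S:=\mathrm{span}\{\chi_{S,\sigma}:\sigma\in\Pi^\ast(S)\}.
\end{equation}
Thus $G_H=\sum_{e\in E}W_e$. Equivalently, $W_S$ consists of the vectors $f(\pi)=g(\pi|_S)$ for functions $g$ on $\Pi(S)$ with $g(\text{trivial})=0$. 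These are the functions on $\Pi^\ast(V)$ that depend only on the restriction to $S$. Two facts follow directly: $S'\subseteq S$ implies $W_{S'}\subseteq W_S$, because restriction factors through $S$; and $\dim W_S=B_{|S|}-1$.

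\emph{Step 2: a criterion detecting hyperedges.} The key claim is that for $S\subseteq V$ with $|S|\ge2$,
\begin{equation}
W_S\subseteq G_H \iff S\subseteq e \text{ for some } e\in E.
\end{equation}
The direction ($\Leftarrow$) is Step 1. For ($\Rightarrow$), we will use a test vector: the function $\delta_S(\pi)$ that equals $1$ if every block of $\pi$ meets $S$ in at most one vertex ($\pi|_S$ totally split), possibly with a Möbius-type alternating sum. We will then show that any sum $\sum_e f_e$ with $f_e$ depending only on $\pi|_e$ cannot equal a function that genuinely depends on all of $S$ jointly unless $S$ sits inside a single $e$.

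The cleanest route is through a "Möbius/interaction" decomposition. Every function on $\Pi(V)$ has a unique expansion $f=\sum_{T\subseteq V} f_T$, where $f_T$ depends only on $\pi|_T$ and has no component depending on a proper subset of $T$. For instance, one can take orthogonal projections onto $W_T\ominus\sum_{T'\subsetneq T}W_{T'}$, with respect to a product-like measure on partitions. Then $G_H=\bigoplus_{T\subseteq\text{some }e}U_T$, where $U_T$ is the "pure interaction" space of $T$. This space is nonzero for every $|T|\ge2$; for $|T|=2$, it contains the indicator that the two vertices lie in the same block. Given this decomposition, $G_{H}$ determines the downward-closed family $\{T: T\subseteq e\in E\}$, whose maximal elements are exactly $E$ by the antichain condition. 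Hence $H$ is recovered.

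\emph{Step 3 and the expected obstacle.} The main obstacle is establishing the direct-sum decomposition: that $\sum_{S'\subsetneq S}W_{S'}\subsetneq W_S$ with a nonzero complement, and more strongly that $U_T\cap\sum_{T'\not\supseteq T}W_{T'}=0$. As a fallback, we will prove the weaker statement directly using Theorem~\ref{thm1}. If $e\in E_1$ is not contained in any edge of $E_2$, consider $H_2'=H_2\cup\{e\}$, made Sperner by deleting the edges contained in $e$. We would then need $\operatorname{rank}R(H_2')>\operatorname{rank}R(H_2)$, that is $W_e\not\subseteq G_{H_2}$. Computing via inclusion--exclusion, the rank increment is
\begin{equation}
\sum_{F}(-1)^{|F|}\ldots=\dim W_e-\dim\Big(W_e\cap\sum_{e'}W_{e'}\Big),
\end{equation}
and we will show it is positive. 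This positivity follows from the fact that $\dim\big(\sum_{e'}W_{e'\cap e}\big)$, again given by Theorem~\ref{thm1}, is strictly less than $B_{|e|}-1$ whenever every $e'\cap e$ is a proper subset of $e$. That strict inequality holds because the single top interaction term (all of $e$ in one block versus finer splits) is missed. Thus $H_1\ne H_2$ forces some edge not covered by the other, which gives distinct subspaces.
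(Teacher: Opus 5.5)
Your architecture matches the paper's. You identify the columns of $R(H)$ with indicator functions so that $G_H=\sum_{e\in E}W_e$, show that $W_S\subseteq G_H$ iff $S$ lies in some hyperedge, and read off $E$ as the maximal such $S$. Your fallback reduction (some $e\in E_1$ is covered by no edge of $E_2$, so it suffices that $W_e\not\subseteq G_{H_2}$) is also sound.

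\textbf{The gap.} The only nontrivial ingredient is never proved: for $|S|\ge 2$ and proper subsets $T_i\subsetneq S$, one has $\sum_i W_{T_i}\subsetneq W_S$. This is the paper's Lemma~3. Each of your three routes leaves it open.
\begin{itemize}
\item In Step~2 it is only announced.
\item The interaction decomposition presupposes it, since saying the pure-interaction space of $T$ is nonzero is exactly this statement. A Hoeffding-type orthogonal decomposition is also not automatic here, because the restrictions of a partition to different subsets are not independent coordinates.
\item In the fallback, the decisive inequality $\dim\sum_{e'}W_{e'\cap e}<B_{|e|}-1$ is supported only by ``the single top interaction term is missed.'' That restates the claim rather than proving it. It is also not accurate as a description: the quotient $W_e/\sum_{T\subsetneq e}W_T$ is not one-dimensional in general (for $|e|=4$ it has dimension $4$).
\end{itemize}
Separately, your $|T|=2$ illustration is reversed: $W_{\{a,b\}}$ is spanned by the indicator that $a$ and $b$ lie in \emph{different} blocks.

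\textbf{How the paper closes this step.} It uses the explicit dual vector $w(\pi)=(-1)^{s-|\pi|}(|\pi|-1)!$ on $\Pi^\ast(S)$, which is the M\"obius function of the partition lattice up to sign. It shows $w$ is orthogonal to every $\overline{\mathbf{1}_{T,\tau}}$ with $T\subsetneq S$ by counting the extensions of $\tau$ with Stirling numbers and the falling-factorial expansion of $x^u$. The pairing then collapses to $(-1)^{s-b}(b-1)!\,(b-b)^{r}=0$. Together with Lemma~2, $U_S\cap\sum_iU_{T_i}=\sum_iU_{S\cap T_i}$, this gives $D(H)=D'(H)$.

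\textbf{How your fallback could be closed instead.} You can do it with Theorem~1 alone. Each $e'\cap e\subsetneq e$ lies in some $e\setminus\{v\}$, so $\sum_{e'}W_{e'\cap e}\subseteq\sum_{v\in e}W_{e\setminus\{v\}}$. Apply Theorem~1 to the Sperner hypergraph $\{e\setminus\{v\}:v\in e\}$, with $s=|e|$. This gives the latter space codimension $\sum_{j=0}^{s}(-1)^j\binom{s}{j}B_{s-j}$ in $W_e$. That number counts partitions of an $s$-set with no singleton blocks, so it is at least $1$ for $s\ge 2$ (the one-block partition).

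One more point to state explicitly. Your identification $\dim(W_e\cap G_{H_2})=\dim\sum_{e'}W_{e'\cap e}$ applies the inclusion--exclusion formula to the non-antichain families $E_2\cup\{e\}$ and $\{e'\cap e\}_{e'\in E_2}$. This is legitimate, because the induction in Appendix~A never uses the antichain condition, but you should say so.

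With that counting step written out, your route would be a valid and more elementary alternative to the paper's test-vector lemma.
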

\textbf{Proof:} See Appendix \ref{appB}.

This viewpoint also clarifies why multipartite entanglement measures are necessary, rather than relying only on bipartite ones. 
Bipartite measures alone cannot distinguish all different classes of Sperner states simultaneously: the simplest example is that the triangle state \eqref{eq:triangle-state} can share the same bipartite data as a generic tripartite state. However, the full set of multipartite coordinates in $\mathrm{MEMS}_n$ can separate them.

Having established a correspondence between geometric objects in $\mathrm{MEMS}_n$ and Sperner states, we next translate basic geometric operations in $\mathrm{MEMS}_n$ into operations on Sperner hypergraphs. 
Two fundamental operations on subspaces in a linear space are: (i) forming the \emph{linear span} of two subsets (the smallest linear subspace containing both of them), and (ii) taking their \emph{intersection}. 
We find that these operations admit natural counterparts in the hypergraph language.

\begin{theorem}\label{thm3}
Let $G_{H_a},G_{H_b}\subseteq \mathrm{MEMS}_n$ be the geometric objects corresponding to two Sperner hypergraphs
$H_a=(V,E_a)$ and $H_b=(V,E_b)$. 
Then the linear span $\mathrm{span}(G_{H_a}\cup G_{H_b})$ corresponds to the Sperner hypergraph
\begin{equation*}
    H_{a\vee b}:=\mathrm{Antichain}(E_a\cup E_b),
\end{equation*}
i.e., the hypergraph obtained by taking the union of all hyperedges in $H_a$ and $H_b$ and then removing redundant
hyperedges so that the resulting edge set is an antichain.
\end{theorem}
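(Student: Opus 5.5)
The plan is to identify $G_H$ with the column space of $R(H)$, and then show that the column space of $R(H_{a\vee b})$ equals the sum of the column spaces of $R(H_a)$ and $R(H_b)$. As defined in the text, $G_H$ is the intersection of all hyperplanes given by left-null vectors of $R(H)$. By elementary linear algebra this intersection is exactly $\operatorname{col}(R(H))$, the image of the microscopic parameter space under $R(H)$. We use that the microscopic coordinates are treated as independent free parameters. The statement then reads
\begin{equation*}
\operatorname{col}\bigl(R(H_{a\vee b})\bigr)=\operatorname{col}\bigl(R(H_a)\bigr)+\operatorname{col}\bigl(R(H_b)\bigr),
\end{equation*}
since the linear span of two subspaces is their sum.

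The first step is to observe that the columns of $R(H)$ are indexed by pairs $(e,\pi_e)$. The column attached to $(e,\pi_e)$ is the indicator vector $v_{e,\pi_e}$ of the set $\{\pi\in\Pi^\ast(V):\pi|_e=\pi_e\}$. This vector depends only on $e$ and $\pi_e$, not on the rest of the hypergraph. Writing $W_e:=\mathrm{span}\{v_{e,\pi_e}:\pi_e\in\Pi^\ast(e)\}$, we get $\operatorname{col}(R(H))=\sum_{e\in E}W_e$. Consequently, for the plain union $E_a\cup E_b$, whether or not it is an antichain, the column space is $W(E_a\cup E_b)=\sum_{e\in E_a}W_e+\sum_{e\in E_b}W_e$. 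This is exactly the sum of the two column spaces.

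The second, and main, step is to show that removing redundant hyperedges does not change the column space. It suffices to prove the monotonicity lemma $W_{e'}\subseteq W_e$ whenever $e'\subseteq e$. Given $\sigma\in\Pi^\ast(e')$, we decompose $v_{e',\sigma}$ by grouping macroscopic partitions $\pi$ according to their restriction to $e$. Restriction is transitive, $(\pi|_e)|_{e'}=\pi|_{e'}$, so
\begin{equation*}
v_{e',\sigma}=\sum_{\tau\in\Pi(e):\ \tau|_{e'}=\sigma} v_{e,\tau}.
\end{equation*}
Here $v_{e,\tau}$ is the indicator of $\{\pi\in\Pi^\ast(V):\pi|_e=\tau\}$, and it is defined for all $\tau\in\Pi(e)$, including the one-block partition $\tau=\{e\}$. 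Because $\sigma$ is nontrivial, every $\tau$ with $\tau|_{e'}=\sigma$ has at least two blocks. Indeed, if $\tau$ had a single block, its restriction to $e'$ would be the single block $e'$, which is trivial. So only nontrivial $\tau\in\Pi^\ast(e)$ appear in the sum, and $v_{e',\sigma}\in W_e$. This point, that the trivial-partition column never enters, is where I expect the only real care to be needed.

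With the lemma in hand, deleting any hyperedge contained in another leaves $\sum_e W_e$ unchanged. Hence $\operatorname{col}\bigl(R(\mathrm{Antichain}(E_a\cup E_b))\bigr)=\operatorname{col}\bigl(R(E_a\cup E_b)\bigr)$, and therefore $G_{H_{a\vee b}}=\mathrm{span}(G_{H_a}\cup G_{H_b})$.

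As a consistency check, this agrees with the ordering $H_1\succeq H_2\Rightarrow G_{H_2}\subseteq G_{H_1}$ stated earlier, since that ordering follows from the same lemma. By Theorem~\ref{thm2}, the hypergraph realizing this subspace is unique, so "corresponds to" is unambiguous.
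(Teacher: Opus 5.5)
Your proposal is correct and follows essentially the same route as the paper. Both identify $G_H$ with $\mathrm{Col}(R(H))=\sum_{e\in E}U_e$ and note that the plain union $E_a\cup E_b$ has column space equal to the sum of the two. Both then observe that deleting redundant hyperedges changes nothing, because $U_{e'}\subseteq U_e$ for $e'\subseteq e$. Your decomposition of $v_{e',\sigma}$ over nontrivial $\tau\in\Pi^\ast(e)$ is exactly the identity in the paper's lemma $U_S\cap U_T=U_{S\cap T}$, which the paper uses only implicitly at this step. The one cosmetic difference is that the paper states the conclusion dually, on the equality sets, via $(Y+Z)^\perp=Y^\perp\cap Z^\perp$.
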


\begin{theorem}\label{thm4}
Let $G_{H_a},G_{H_b}\subseteq \mathrm{MEMS}_n$ be the geometric objects defined above. 
Then the intersection $G_{H_a}\cap G_{H_b}$ corresponds to the Sperner hypergraph
\begin{equation*}
    H_{a\wedge b}:=\mathrm{Antichain}\bigl(\{\, e\cap f\ :\ e\in E_a,\ f\in E_b,\ |e\cap f|\ge 2\,\}\bigr),
\end{equation*}
i.e., the hypergraph whose hyperedges are all nontrivial pairwise intersections of the hyperedges of $H_a$ and $H_b$,
followed by the removal of redundancies to restore the antichain property.
\end{theorem}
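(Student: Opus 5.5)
The plan is to describe $G_H$ as a sum of subspaces and then show that these sums intersect the way the hyperedges do. Write $W_e:=\operatorname{col}R(\{e\})\subseteq\mathrm{MEMS}_n$ for the column space contributed by a single hyperedge $e$. Since the columns of $R(H)$ are exactly the union of the columns of the $R(\{e\})$, we have $G_H=\sum_{e\in E}W_e$. Each $W_e$ is spanned by the vectors $v_{e,\sigma}:=\sum_{\pi:\ \pi|_e=\sigma}\mathbf{1}_\pi$ for $\sigma\in\Pi^\ast(e)$. These vectors have disjoint supports, so $\dim W_e=B_{|e|}-1$. If $f\subseteq e$, then grouping the $v_{e,\sigma}$ according to $\sigma|_f$ gives $v_{f,\tau}=\sum_{\sigma|_f=\tau}v_{e,\sigma}$, hence $W_f\subseteq W_e$.

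The main step is the identity
\begin{equation*}
W_e\cap W_f=W_{e\cap f},
\end{equation*}
with $W_S=0$ whenever $|S|\le1$. To prove it, I would dualize. A vector $x\in\mathrm{MEMS}_n$, viewed as a function on partitions, lies in $W_e$ exactly when $x(\pi)$ depends only on $\pi|_e$ and $x$ vanishes on those $\pi$ with $\pi|_e$ trivial. So if $x\in W_e\cap W_f$, then $x$ is invariant under every move that fixes $\pi|_e$, and also under every move that fixes $\pi|_f$. I would then show that the equivalence relation generated by these two kinds of moves is exactly ``same restriction to $e\cap f$.'' The route is a connectivity argument: starting from any $\pi$, first use $e$-preserving moves to put the vertices of $f\setminus e$ into a canonical configuration, and then use $f$-preserving moves to rearrange $e\setminus f$. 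Vertices outside $e\cup f$ are free under both kinds of move. I expect this combinatorial connectivity lemma to be the main obstacle. The case $|e\cap f|\le1$ must also be checked: there the classes collapse to a single class that contains a partition which is trivial on $e$, which forces $x=0$.

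With the intersection identity in hand, Theorem \ref{thm4} follows directly. Sums of subspaces do not distribute over intersections in general, so I would not argue that way. Instead I would use the explicit description: $G_{H_a}=\sum_{e\in E_a}W_e$ is the space of functions that are sums of functions depending on the $\pi|_e$. The goal is to show
\begin{equation*}
\Bigl(\sum_{e\in E_a}W_e\Bigr)\cap\Bigl(\sum_{f\in E_b}W_f\Bigr)=\sum_{e,f}W_{e\cap f}.
\end{equation*}
The inclusion $\supseteq$ is immediate from the containment $W_f\subseteq W_e$ for $f\subseteq e$.

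For the inclusion $\subseteq$, I would compare dimensions rather than build explicit decompositions. The $W_e$ form a distributive family: they behave like the spaces of ``functions depending on coordinates in $e$'' in a graded decomposition. The plan is to exhibit a common basis adapted to all the $W_S$ at once, namely a Möbius-type basis $\{u_{S,\sigma}\}$ indexed by a subset $S$ and a partition $\sigma$ of $S$ that is ``connected/irreducible'' on $S$, chosen so that $W_e=\mathrm{span}\{u_{S,\sigma}: S\subseteq e\}$. Theorem \ref{thm1} supports this, since its inclusion-exclusion form is exactly what such a basis yields. Once the basis exists, each sum $\sum W_e$ is the span of the basis vectors indexed by the down-closure of its edge set. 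Intersections of such coordinate subspaces are then the spans over the intersection of the down-closures, and this intersection is the down-closure of $\{e\cap f\}$. Removing non-maximal sets and sets with $|e\cap f|<2$ gives $H_{a\wedge b}$. Finally, Theorem \ref{thm2} ensures that this subspace identifies the hypergraph uniquely.
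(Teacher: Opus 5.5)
\textbf{Verdict: there is a genuine gap.} The route you choose would work, but its decisive step is asserted rather than proved.

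\textbf{What is sound.} Your reduction is correct: $G_H=\mathrm{Col}(R(H))=\sum_{e\in E}U_e$, and the inclusion $\supseteq$ in $\bigl(\sum_{e\in E_a}U_e\bigr)\cap\bigl(\sum_{f\in E_b}U_f\bigr)=\sum_{e,f}U_{e\cap f}$ is immediate. You are also right that the pairwise identity $U_e\cap U_f=U_{e\cap f}$ cannot by itself give $\subseteq$. That identity is the paper's Lemma~1, and it is easier than you expect: one gluing step suffices (take $R$ with $R|_e=P|_e$ and $R|_f=Q|_f$), with vanishing handled by merging $e\setminus f$ into the block containing $e\cap f$.

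\textbf{The gap.} The inclusion $\subseteq$ is the whole content of the theorem, and you only assert it. A basis adapted to all the $U_S$ at once exists exactly when the lattice of subspaces they generate is distributive. So ``the $W_e$ form a distributive family'' is precisely the claim to be proved, and you neither construct such a basis nor show it exists. Three further problems:
\begin{itemize}
\item The appeal to Theorem~1 runs the wrong way as phrased: an adapted basis implies the inclusion--exclusion formula, not conversely.
\item The analogy with functions of coordinates on a product space does not transfer for free. $\Pi^\ast(V)$ is not a product, and $U_S$ carries the extra vanishing condition at ${\pi_0}_S$.
\item The index set is misidentified. Möbius inversion forces the number of basis vectors attached to $S$ to be $\sum_{T\subseteq S}(-1)^{|S|-|T|}B_{|T|}$. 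This is the number of partitions of $S$ with no singleton block (the one-block partition included), not ``connected/irreducible'' partitions.
\end{itemize}

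\textbf{Repair (i): build the basis.} For $|T|\ge 2$ and $\sigma$ a singleton-free partition of $T$, set $v_{T,\sigma}=\mathbf{1}_{T,\sigma}$ if $\sigma\neq{\pi_0}_T$, and $v_{T,{\pi_0}_T}=-\sum_{\pi\in\Pi^\ast(T)}\mathbf{1}_{T,\pi}$.
\begin{itemize}
\item Working on all of $\Pi(V)$, the functions $\mathbf{1}_{T,\sigma}$ with $\sigma$ singleton-free form a basis ($T=\emptyset$ gives the constant function). The reason is that for $\tau$ singleton-free on $T'$, $\mathbf{1}_{T,\sigma}(\widetilde{\tau}^V)=1$ exactly when $T\subseteq T'$ and $\tau|_T=\sigma$, which is a unitriangular pattern.
\item Each $v_{T,\sigma}$ lies in $U_T$ and vanishes at the trivial partition of $V$, so linear independence survives restriction to $\Pi^\ast(V)$.
\item Exactly $B_{|S|}-1$ of them have $T\subseteq S$, so $U_S=\mathrm{span}\{v_{T,\sigma}:T\subseteq S\}$.
\end{itemize}
Your coordinate-subspace argument then goes through, with Lemma~1 and Theorem~1 as by-products.

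\textbf{Repair (ii): compare dimensions directly via Theorem~1.} Define $d$ by $B_{|T|}-1=\sum_{T'\subseteq T}d(T')$; then $d$ vanishes on sets of size at most one. For any family $E$, the inclusion--exclusion sum of Theorem~1 equals $\sum d(T)$ over all $T$ contained in some member of $E$. Put $X_a=\sum_{e\in E_a}U_e$ and $X_b=\sum_{f\in E_b}U_f$. Then $X_a+X_b$ is the column space for $\mathrm{Antichain}(E_a\cup E_b)$. A set lies below some $e\in E_a$ and below some $f\in E_b$ iff it lies below some $e\cap f$. Hence $\dim(X_a\cap X_b)=\dim X_a+\dim X_b-\dim(X_a+X_b)=\dim\sum_{e,f}U_{e\cap f}$, and together with $\supseteq$ this forces equality.

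\textbf{How the paper does it.} The paper proves distributivity head-on. Lemma~2 rewrites any $f\in U_S\cap\sum_i U_{T_i}$ explicitly in terms of the $\mathbf{1}_{S\cap T_i,\tau}$, using its values on singleton extensions. Lemma~4 extends this to two sums by induction, via $(Y_1+Y_2)\cap Z=(Y_1\cap Z)+(Y_2\cap Z)$. Route (ii) is the shortest given Theorem~1; route (i) is the most structural.
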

 
Proofs of Theorems \ref{thm3} and \ref{thm4} are given in Appendix \ref{appC}.

\section{Generating multipartite entanglement signals }
 
Since a geometric object in $\mathrm{MEMS}_n$ is equivalently characterized by the set of linear equalities it obeys,
we may rephrase Theorems \ref{thm3}-\ref{thm4}  directly in the language of equalities:
\begin{itemize}
    \item The set of equalities shared by two classes of Sperner states is precisely the set of equalities obeyed by their \emph{union} class of Sperner states associated with  $H_{A\vee B}$.
    \item The set of equalities that appear in \emph{either} of two classes of Sperner states is precisely the set of equalities obeyed by their \emph{intersection} class of Sperner states associated with $H_{A\wedge B}$.
\end{itemize}
Using these, we can assign an explicit information-theoretic interpretation to arbitrary linear combinations of the measures $\{E_\pi^{(|\pi|)}\}$ {by finding their corresponding Sperner states via procedures as follows.}
 
Given a collection of linear equalities $C_j(\{E_\pi^{(|\pi|)}\})=0$, $(j=1,2,\dots)$,
where each $C_j$ is a fixed linear functional of the coordinates $\{E_\pi^{(|\pi|)}\}_{\pi\in\Pi^\ast(V)}$, for each constraint $C_j(\{E_\pi^{(|\pi|)}\})=0$, we can search all Sperner states that satisfy this equality. Taking the union over the hyperedges of all their associated hypergraphs, we will get the unique, most general hypergraph $H_{j}$, by Theorem \ref{thm3} (the union operation).

By construction, the constraint $C_j(\{E_\pi^{(|\pi|)}\})=0$ vanishes precisely for the multipartite entanglement structures present in $H_j$, and is therefore \emph{insensitive}. Equivalently, the functional $C_j$ can only ``detect'' entanglement patterns beyond those encoded by $H_{j}$.

Finally, by applying Theorem \ref{thm4} (the intersection operation), we may impose all constraints simultaneously: Taking
$H_\ast :=\bigwedge_j H_{j}$
produces a unique (typically less general) Sperner hypergraph $H_\ast$ that satisfies \emph{all} equalities
$C_j(\{E_\pi^{(|\pi|)}\})=0$ for $j=1,2,\dots$.
In other words, the entire family of constraints $\{C_j=0\}$ is \emph{totally insensitive} to the entanglement
structure encoded by $H_\ast$, and therefore must detect any entanglement structure that lies
beyond $H_\ast$.

{As an application, we construct signals which are sensitive to $k$-partite entanglement in an $n$-partite system.}

\noindent\textbf{Definition 3 ($k$-simple product state).}
An $n$-partite state $\rho$ on parties $V$ is called a \emph{$k$-simple product state} if there exists a partition
\begin{equation}
    V = S_1 \sqcup \cdots \sqcup S_r,
\quad
\max_{a\in[r]} |S_a| = k,
\end{equation}
such that the global state factorizes as
\begin{equation}
    \rho = \rho_{S_1}\otimes\cdots\otimes \rho_{S_r},
\end{equation}
where $\rho$ is a tensor product across a grouping of the parties, and the largest group contains $k$ parties.

The equalities obeyed by the $k$-uniform-complete hypergraph $H^{(k)}_{\mathrm{comp}}$\footnote{{A $k$-uniform-complete hypergraph is defined as a hypergraph in which every hyperedge has size $k$, and every $k$-element subset of $V$ forms a hyperedge.}} identify precisely with the linear combinations of the measures $\{E_\pi^{(|\pi|)}\}$ that are \emph{insensitive} to entanglement involving at most $k$ parties, but can remain sensitive to entanglement involving more than $k$ parties. 
In particular, these combinations vanish for every $k$-simple product state. 
Moreover, by Theorem \ref{thm3}, these are the \emph{only} linear combinations that vanish for \emph{all} $k$-simple product states.

The number of linearly independent combinations that are \emph{sensitive} to $k$-partite entanglement but \emph{insensitive} to entanglement involving fewer parties is given by the difference between the number of equalities obeyed by the $(k-1)$-complete hypergraph and the number of equalities obeyed by the $k$-complete hypergraph. By Theorem 1, this number is
$\binom{n}{k}\sum_{j=0}^{k}(-1)^j\binom{k}{j}\,B_{k-j}.$ We provide n=4 examples in Appendix \ref{appD}.

It is worth noting that some of these linear combinations have appeared previously in special settings. 
For example, in the case $n=3$ and $k=3$, there is only one independent combination, namely the relation in \eqref{eq:triangle-linear-relation}. 
Ref.~\cite{Liu:2024ulq,Harper:2025uui} studied {this case} when $E^{(m)}$ are taken to be multi-entropies, while Ref.~\cite{Bao:2025psl} studied the case when $E^{(m)}$ are taken to be multipartite EWCS quantities in holography\cite{Maldacena:1997re,Takayanagi:2017knl}. 
For the case $n=k$ with $E^{(m)}$ given by R\'enyi multi-entropies, Ref.~\cite{Iizuka:2025ioc} studied additional constraints obtained by imposing the permutation symmetry among the $n$ subsystems. 
We believe that our framework unifies and generalizes these results by organizing them systematically in terms of Sperner hypergraphs and their corresponding geometric objects in $\mathrm{MEMS}_n$.

\section*{Acknowledgement}

We thank Bart{\l}omiej Czech for useful discussion.
This work was supported by Project 12575068 supported by the National Natural Science Foundation of China.
\bibliography{apssamp}

\newpage
\onecolumngrid
\appendix
\section{The number of equalities obeyed by a Sperner class}\label{appA}

In this appendix, we prove Theorem \ref{thm1}. By calculating the rank of the partition-reduction matrix, Theorem \ref{thm1} provides an explicit expression of the number of equalities that a class of Sperner states satisfies.

First, we need to view the column space of the partition-reduction matrix as a particular function space, so that the underlying combinatorial structures become manifest, and the rank (dimension) calculation is more straightforward.
We begin by defining, for a vertex subset $S\subseteq V,\ |S|\ge 2$ and a specific partition $\pi\in\Pi(S)$\footnote{$\Pi(S)$ is the collection of all partitions on $S$, including the trivial one.}, the indicator function $\mathbf{1}_{S,\pi}$ acting on nontrivial partition of vertices by
\begin{equation}
\mathbf{1}_{S,\pi}(P)=
\begin{cases}
1,&\text{if } P|_{S}=\pi,\\
0,&\text{otherwise},
\end{cases}
\quad P\in \Pi^\ast(V).
\end{equation}
Thus, every column of the partition-reduction matrix $R(H)$ indexed by hyperedge $e_i$ and partition $\sigma\in \Pi^\ast(e_i)$ can be naturally identified with the function $\mathbf{1}_{e_i,\sigma}$.
Furthermore, we can associate to any $S$ the subspace spanned by all the indicator functions $\mathbf{1}_{S,\pi}$ with $\pi$ a nontrivial partition on $S$\footnote{Unlike $\mathbf{1}_{S,\pi}$, we can also define $U_S$ when $|S|\le1$. It is easy to see that $U_S=\{0\}$ in this case.}:
\begin{equation}
U_{S}:=\operatorname{span}\{\mathbf{1}_{S,\pi}:\ \pi\in\Pi^\ast(S)\}\subseteq\mathbb{R}^{\Pi^\ast(V)}.
\end{equation}
Similarly, for every hyperedge $e_i$, we can write $U_{e_i}$ as the span of all matrix columns $\mathbf{1}_{e_i,\sigma}, \sigma\in\Pi^\ast(e_i)$. Therefore, the column space of $R(H)$ can be written as
$\operatorname{Col}(R(H))=\sum_{i=1}^m U_{e_i}$,\footnote{In Appendices A, B and C, the summation symbol $\Sigma$ over subspaces always refers to their linear span. } where $m$ is the total number of hyperedges. The rank of $R(H)$ is then given by:
\begin{equation}\label{rankR}
\operatorname{rank}_{\mathbb{R}}\bigl(R(H)\bigr)=\dim\left(\sum_{i=1}^m U_{e_i}\right),
\end{equation}
and we focus on deriving an inclusion-exclusion form for the desired quantity $\dim\left(\sum_{i=1}^m U_{e_i}\right)$.

Before further calculations, we first consider the physical interpretation of $U_{e_i}$. In fact, for every $f \in \mathbb{R}^{\Pi^\ast(V)}$ and every $ S\subseteq V$ with $|S|\ge2$,
\begin{align}
    f\in U_S \iff & 1)\ \forall P, Q \in \Pi^\ast(V), P|_S=Q|_S\rightarrow f(P)=f(Q) \\ 
    & 2)\ \forall P|_S={\pi_0}_S, f(P)=0,
\end{align}
where for any vertex set $T$, we denote ${\pi_0}_T$ as the trivial partition on it. Namely, $U_{S}$ is the collection of all the functions acting on $P\in \Pi^\ast(V)$ such that i) they depend only on the restriction of $P$ on $S$ and ii) they vanish for any $P$ that is trivial on $S$.
From the definition of $U_S$ and $\mathbf{1}_{S,\pi}$, any $f\in U_S$ naturally satisfies conditions $1)$ and $2)$ on the right-hand side. 

We first give a proper definition of the ``singleton extension": For vertex sets $A\subseteq B\subseteq V$ and $\sigma\in \Pi(A)$, we denote by $\widetilde{\sigma}^B$, the extension of $\sigma$ to $\Pi(B)$ that satisfies $\widetilde{\sigma}^B|_A=\sigma$ while including every vertex in $B/A$ as a singleton block\footnote{Here $B/A$ denotes the set of elements in $B$ but not in $A$. When $B=A$, $B/A=\emptyset$ and $\widetilde{\sigma}^B=\sigma$.}.
Returning to the proof, in the other direction, if $S\subsetneq V$, for every $\sigma _j \in \Pi(S)$, there always exists a $K_j=\widetilde{\sigma_j}^V \in \Pi^\ast(V)$ with $K_j|_S=\sigma_j$. Denoting $a_j=f(K_j)$, we can introduce the function 
\begin{equation}
    g=\sum_{\sigma_j \in \Pi(S)}a_j\mathbf{1}_{S,\sigma_j}.
\end{equation}
For any $P \in \Pi^\ast(V)$, there is always a $\sigma _k\in \Pi(S)$ with $P|_S=\sigma _k$. Because of $1)$, $g(P)=f(P)=a_k$. Meanwhile, by $2)$, the coefficient $a_0$ corresponding to $\sigma_0={\pi_0}_S$ vanishes, and we can conclude that 
\begin{equation}
    f=g=\sum_{\sigma_j \in \Pi^\ast (S)}a_j\mathbf{1}_{S,\sigma_j}\in U_S.
\end{equation}
If $S=V$, we directly have $f=\sum_{K_j \in \Pi^\ast (V)}f(K_j)\mathbf{1}_{S,K_j}\in U_V$, completing the proof of conditions 1) and 2).

We then introduce some crucial properties satisfied by $U_S$ that greatly simplify the proof of Theorem \ref{thm1}.
To begin with, for any $S\subseteq V$, 
\begin{equation}\label{USdim}
    \operatorname{dim}(U_S)=B_{|S|}-1.
\end{equation}
Namely, the dimension of every $U_S$ is given by the Bell number of $|S|$ minus one. As discussed in Definition 1, $B_{|S|}-1$ calculates the number of nontrivial partitions $\pi \in \Pi^\ast(S)$. When $|S|\leq 1$, there is no nontrivial $\pi \in \Pi^\ast(S)$, so $\operatorname{dim}(U_S)=0=B_{|S|}-1$. When $|S|\ge2$, every $\mathbf{1}_{S,\pi} \in U_S$ is nonzero, since for $\pi \in \Pi^\ast(S)$, one always has the singleton expansion $P=\widetilde{\pi}^V\in \Pi^\ast(V)$ with $P|_S=\pi$. By definition, $\mathbf{1}_{S,\pi}(P)=1$. Furthermore, for any $\pi, \pi' \in \Pi^\ast(S)$, the supports $\{P|P\in \Pi^\ast(V): P|_S=\pi\}$ and $\{P'|P'\in \Pi^\ast(V): P|_S=\pi'\}$ are disjoint, rendering the $\{\mathbf{1}_{S,\pi}\}_{\pi \in \Pi^\ast(S)}$ linearly independent. Thus, they form a valid basis for $U_S$, and (\ref{USdim}) is proved\footnote{It is easy to see that, when $S\subsetneq V$, similarly, $\mathbf{1}_{S,{\pi_0}_S}$ is also nonzero and has disjoint supports with $\{\mathbf{1}_{S,\pi}\}_{\pi \in \Pi^\ast(S)}$. Therefore, the complete $\{\mathbf{1}_{S,\pi}\}_{\pi \in \Pi(S)}$ is also linearly independent.}.

Based on (\ref{USdim}), the right-hand side of Theorem \ref{thm1} can be written as:
\begin{equation}\label{RHSasmeet}
    \sum_{\emptyset\neq F\subseteq[m]}(-1)^{|F|+1}(B_{k(F)}-1)=\sum_{\emptyset\neq F\subseteq[m]}(-1)^{|F|+1}\operatorname{dim}\big( \bigcap_{i\in F} U_{e_i}\big).
\end{equation}
For a cleaner presentation of the proof, we introduce the following lemma:
\begin{lemma}\label{MeetofU}
    $U_S \cap U_T=U_{S\cap T}$.
\end{lemma}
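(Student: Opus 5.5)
We prove the two inclusions separately, relying throughout on the characterization of $U_S$ established just above: $f\in U_S$ if and only if (1) $f(P)$ depends only on $P|_S$, and (2) $f(P)=0$ whenever $P|_S$ is the trivial partition ${\pi_0}_S$. The degenerate cases where $|S\cap T|\le 1$ (so $U_{S\cap T}=\{0\}$) are handled inside the same argument.

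\emph{The inclusion $U_{S\cap T}\subseteq U_S\cap U_T$.} This direction is immediate. Suppose $f$ depends only on $P|_{S\cap T}$ and vanishes when that restriction is trivial. Since $P|_{S\cap T}=(P|_S)|_{S\cap T}$, the function $f$ depends only on $P|_S$. If $P|_S={\pi_0}_S$, then $P|_{S\cap T}$ is trivial as well, so $f(P)=0$. Hence $f\in U_S$, and by the same argument $f\in U_T$.

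\emph{The inclusion $U_S\cap U_T\subseteq U_{S\cap T}$.} Take $f\in U_S\cap U_T$. We must show two things.
\begin{itemize}
\item[(a)] If $P|_{S\cap T}=Q|_{S\cap T}$, then $f(P)=f(Q)$.
\item[(b)] If $P|_{S\cap T}$ is trivial, then $f(P)=0$.
\end{itemize}

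For (a), the plan is to connect $P$ to $Q$ through an intermediate nontrivial partition $R$ with $R|_S=P|_S$ and $R|_T=Q|_T$. Then $f(P)=f(R)$ because $f\in U_S$, and $f(R)=f(Q)$ because $f\in U_T$. A natural candidate is to join $P|_S$ on $S$ with $Q|_T$ on $T$: two vertices of $S\cup T$ are related if they share a block in $P|_S$ or in $Q|_T$, and we take the transitive closure. The vertices outside $S\cup T$ are placed as singletons.

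The main obstacle is that the transitive closure can merge blocks, through chains that pass back and forth via $S\cap T$. We must check that the agreement $P|_{S\cap T}=Q|_{S\cap T}$ prevents this. The argument is as follows. Each block of $P|_S$ meets $S\cap T$ in at most one block of the common restriction $\tau:=P|_{S\cap T}$, and likewise each block of $Q|_T$. A chain can therefore only travel through a single $\tau$-block. Consequently the closure restricted to $S$ is exactly $P|_S$, and restricted to $T$ is exactly $Q|_T$.

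We also need $R$ to be nontrivial. If $V\neq S\cup T$, this holds because of the singleton vertices outside $S\cup T$. If $S\cup T=V$, then $R$ could be trivial only if its restrictions were trivial. That would make $P|_S$ trivial, in which case $f(P)=0$, and similarly $f(Q)=0$, so the desired equality holds directly. Some care is needed here, and we will route that case through (b) where possible.

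For (b), suppose $P|_{S\cap T}$ is trivial. Define $R$ to be trivial on $S$ and equal to $Q|_T$ on $T$ for a suitable $Q$, with singletons elsewhere. The aim is to find an $R$ with $R|_S=P|_S$ or with $R|_S$ trivial, so that $f(R)=0$ via condition (2) for $S$, and then transfer along $T$.

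More concretely, the plan is: first replace $P$ by $R_1$ with $R_1|_T=P|_T$ and $R_1|_S$ trivial. This is possible by gluing, since both $P|_T$ and ${\pi_0}_S$ restrict to $\tau$ trivially on $S\cap T$, and gluing preserves the restrictions by the argument in (a). Then $f(P)=f(R_1)$ because $f\in U_T$, and $f(R_1)=0$ because $f\in U_S$.

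Nontriviality of $R_1$ fails only when $S\cup T=V$ and $P|_T$ is trivial. In that case $f(P)=0$ directly from condition (2) for $T$. When $|S\cap T|\le1$, the same gluing works with no compatibility condition at all, and it shows $f\equiv 0$, as required since $U_{S\cap T}=\{0\}$.
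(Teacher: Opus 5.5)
Your proposal is correct and follows essentially the paper's argument: both prove the reverse inclusion by checking conditions 1) and 2) for $U_{S\cap T}$. For condition 1) each uses an intermediate partition $R$ with $R|_S=P|_S$ and $R|_T=Q|_T$, and for condition 2) one with $R|_S={\pi_0}_S$ and $R|_T=P|_T$. Your version is a bit more uniform: it handles $|S\cap T|\le 1$ with the same gluing instead of the paper's basis-coefficient argument, and it actually justifies the existence and nontriviality of the glued partition, which the paper only asserts.
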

\begin{proof}
First we address the trivial cases where any of $U_S$, $U_T$ or $U_{S\cap T}$ may be $\{0\}$. When $U_S$ or $U_T$ is $\{0\}$, the equation is immediate. When $U_{S\cap T}=\{0\}$ but $U_S,U_T\neq\{0\}$, since $\{0\}\subseteq U_S \cap U_T$, we only need to prove the inclusion $U_S \cap U_T\subseteq U_{S\cap T}$. To begin with, any $f\in U_S\cap U_T$ can be expressed as:
\begin{equation}
    f=\sum_{\sigma_i \in \Pi^\ast (T)}a_i\mathbf{1}_{T,\sigma_i}.
\end{equation}
Under our premise, $|S\cap T|=0$ or $1$, and $T/S\neq\emptyset$\footnote{This is because $U_{S\cap T}=\{0\}\neq U_T$, so $|T|\ge 2$, while $|S\cap T|\le 1$.}.
Due to the limited overlap of $S$ and $T$, for any $\sigma_i\in \Pi^\ast (T)$ we can always construct some $P_i\in \Pi^\ast(V)$ such that $P_i|_S={\pi_0}_S$ and $P_i|_T=\sigma_i$\footnote{When $|S\cap T|=0$, the existence of $P_i$ is immediate. When $|S\cap T|=1$, stacking all vertices in $S/T$ into the same partition block as the single vertex in $S\cap T$ yields the desired $P_i$.}. By condition 2) on $S$, any such $f(P_i)=a_i=0$. Therefore, $f=0$, and we have $U_S \cap U_T\subseteq U_{S\cap T}$.

Consider then the case where none of $U_S$, $U_T$ and $U_{S\cap T}$ are $\{0\}$. Namely, $|S|,|T|,|S\cap T|\ge2$. For the inclusion $U_S \cap U_T\supseteq U_{S\cap T}$, any $g \in U_{S\cap T}$ has the decomposition 
\begin{equation}
    g=\sum_{\sigma_j \in \Pi^\ast (S\cap T)}a_j\mathbf{1}_{S\cap T,\sigma_j}.
\end{equation}
Note that by definition, $\mathbf{1}_{S\cap T,\sigma_j}(P)=1$ if and only if the $\pi=P|_{S}$ satisfies $\pi|_{S\cap T}=\sigma_j$. More explicitly, 
\begin{equation}
    \mathbf{1}_{S\cap T,\sigma_j}=\sum_{\substack{\pi\in\Pi^\ast(S)\\ \pi|_{S\cap T}=\sigma_j}}\mathbf{1}_{S,\pi}.
\end{equation}
Thus, $g\in U_S$. Similarly, exchanging $S$ and $T$ gives $g\in U_T$, resulting in $U_S \cap U_T\supseteq U_{S\cap T}$.

In the opposite direction, we use conditions $1)$ and $2)$ for $U_{S\cap T}$ to prove $f\in U_{S\cap T}$ for any $f\in U_S \cap U_T$. For condition $1)$, consider any $P,Q \in \Pi^\ast(V)$ such that $P|_{S\cap T}=Q|_{S\cap T}\neq {\pi_0}_{S\cap T}$\footnote{It suffices to discuss this case, since when $P|_{S\cap T}=Q|_{S\cap T}= {\pi_0}_{S\cap T}$, condition $1)$ follows directly from condition $2)$.}. Then one can always find an $R\in \Pi^\ast(V)$ satisfying $R|_{S}=P|_{S}$ and $R|_{T}=Q|_{T}$, since there is no contradiction between $P$ and $Q$ on $S\cap T$. Applying condition $1)$ on $U_S$ gives $f(R)=f(P)$. Similarly, from $U_T$ we have $f(R)=f(Q)$. As a result, $f(P)=f(R)=f(Q)$, and condition $1)$ for $f\in U_{S\cap T}$ is proved.

Regarding condition $2)$, for any $P$ satisfying $P|_{S\cap T}={\pi_0}_{S\cap T}$, if $P|_{S}={\pi_0}_{S}$ (or $P|_{T}={\pi_0}_{T}$), then by condition $2)$ on $U_S$ (or $U_T$), $f(P)=0$. On the other hand, if $P|_{S}\neq {\pi_0}_{S}$, and $P|_{T}\neq {\pi_0}_{T}$, we can define an $R'\in \Pi^\ast(V)$ by moving all vertices in $S/T$ into the partition block of $P$ that contains $S\cap T$\footnote{Since $P|_{S}\neq 0$ and $P|_{S\cap T}={\pi_0}_{S\cap T}$, we have $S/T\neq \emptyset$.}. This leads to $R'|_{T}=P|_{T}$, and $R'|_S={\pi_0}_{S}$. By condition $2)$ of $U_S$, $f(R')=0$, while condition $1)$ of $U_T$ implies $f(P)=f(R')$. Thus, $f(P)=0$ and we have proved the condition $2)$ for $f\in U_{S\cap T}$. 
Combining $1)$ and $2)$ gives the reverse inclusion
relation $U_S \cap U_T\subseteq U_{S\cap T}$. In conclusion, $U_S \cap U_T= U_{S\cap T}$\footnote{In the following discussion, we identify $U_S\cap U_T$ with $U_{S\cap T}$, and make no distinction between them.}.

\end{proof}

Therefore, from (\ref{rankR}), (\ref{RHSasmeet}) and Lemma \ref{MeetofU}, to prove Theorem \ref{thm1} it suffices to show
\begin{equation}\label{simplifiedthm1}
    \dim\Big(\sum_{i=1}^{m}U_{e_i}\Big)
=\sum_{\emptyset\neq F\subseteq [m]}(-1)^{|F|+1}\dim\bigl(U_{\bigcap_{i\in F} e_i}\bigr).
\end{equation}
During this proof, a stronger version of Lemma \ref{MeetofU} concerning the linear combinations of $U_{T_i}$ plays a significant role.
\begin{lemma}\label{MeetofSumU}
\begin{equation}\label{MeetSum}
    U_S\ \cap \big(\sum_{i=1}^{m} U_{T_i} \big)=\sum_{i=1}^{m} U_{S\cap T_i}.
\end{equation}
\end{lemma}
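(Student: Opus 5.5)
The inclusion $\supseteq$ is immediate from Lemma \ref{MeetofU}. Indeed, each $U_{S\cap T_i}=U_S\cap U_{T_i}$ lies both in $U_S$ and in $\sum_i U_{T_i}$. The real content is therefore the inclusion $\subseteq$. My plan is to construct a single linear projection $\mathcal{P}_S$ on $\mathbb{R}^{\Pi^\ast(V)}$ with the following three properties:
\begin{itemize}
\item[(a)] its image lies in $U_S$;
\item[(b)] it acts as the identity on $U_S$;
\item[(c)] it maps each $U_T$ into $U_{S\cap T}$.
\end{itemize}
Given these, the argument is short. Take any $f\in U_S\cap\sum_i U_{T_i}$ and write $f=\sum_i g_i$ with $g_i\in U_{T_i}$. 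Then (b) and linearity give $f=\mathcal{P}_S f=\sum_i \mathcal{P}_S g_i$, and (c) places each term in $U_{S\cap T_i}$.

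The case $S=V$ is trivial, since $U_V=\mathbb{R}^{\Pi^\ast(V)}$ and $S\cap T_i=T_i$. I therefore assume $S\subsetneq V$. Using the singleton extension introduced above, I define
\begin{equation*}
(\mathcal{P}_S f)(P):=f\bigl(\widetilde{P|_S}^{V}\bigr)-f\bigl(\widetilde{{\pi_0}_S}^{V}\bigr),\qquad P\in\Pi^\ast(V).
\end{equation*}
Both arguments are nontrivial partitions of $V$ because $S\neq V$, so this is well defined.

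Property (a) follows from the characterization of $U_S$ by conditions 1) and 2). The right-hand side depends only on $P|_S$, and it vanishes when $P|_S={\pi_0}_S$.

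Property (b) also uses conditions 1) and 2), this time applied to $f\in U_S$. Condition 2) gives $f(\widetilde{{\pi_0}_S}^V)=0$. Condition 1) gives $f(\widetilde{P|_S}^V)=f(P)$, because $\widetilde{P|_S}^V|_S=P|_S$. Hence $\mathcal{P}_S f=f$.

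The step I expect to require the most care is (c). Take $g\in U_T$. The key observation is that $\widetilde{P|_S}^V|_T$ is the partition of $T$ obtained from $P|_{S\cap T}$ by adding every vertex of $T\setminus S$ as a singleton. Hence $g(\widetilde{P|_S}^V)$ depends only on $P|_{S\cap T}$, which is condition 1) for $U_{S\cap T}$.

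For condition 2), suppose $P|_{S\cap T}$ is trivial, or $|S\cap T|\le 1$. Then $\widetilde{P|_S}^V|_T$ coincides with $\widetilde{{\pi_0}_S}^V|_T$: both consist of $S\cap T$ as a single block, when it is nonempty, plus singletons on $T\setminus S$. By condition 1) for $g\in U_T$ the two terms in the definition of $\mathcal{P}_S g$ then cancel, so $\mathcal{P}_S g$ vanishes.

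When $|S\cap T|\le 1$ this shows $\mathcal{P}_S g=0$, consistent with $U_{S\cap T}=\{0\}$. When $|S\cap T|\ge 2$ it gives condition 2) for $U_{S\cap T}$, so $\mathcal{P}_S g\in U_{S\cap T}$ by the earlier characterization.

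The main points to check carefully are therefore twofold. First, that restriction commutes with singleton extension in the way claimed. Second, that the degenerate cases ($|S\cap T|\le1$, or $T\subseteq S$) do not break the cancellation argument.
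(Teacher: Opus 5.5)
Your proof is correct and follows the paper's argument. The paper also writes $f=\sum_i g_i$, evaluates on singleton extensions of $P|_S$, and subtracts the vanishing value $f(\widetilde{{\pi_0}_S}^V)=0$; its final expansion of $f$ in the $\mathbf{1}_{S\cap T_i,\tau}$ is exactly $\sum_i \mathcal{P}_S g_i$ term by term. Packaging this as an idempotent $\mathcal{P}_S$ checked through conditions 1) and 2) is a cleaner presentation: the $|S\cap T_i|\le 1$ terms are killed one at a time, rather than gathered into a multiple of the all-ones function and cancelled.
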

\begin{proof}
    First we need to clear up some subtleties. When $S=V$, (\ref{MeetSum}) becomes the trivial identity $\sum_{i=1}^{m} U_{T_i}=\sum_{i=1}^{m} U_{T_i}$, so we only consider $S\subsetneq V$. Moreover, when $|S|\le1$, $U_S=\{0\}$, and both sides of the equation are $\{0\}$. When some $U_{T_i}$ are $\{0\}$, they contribute nothing to the sums on either sides, so it is safe to discard them. Therefore, it is sufficient to focus on the case where some $U_{S\cap{T_i}}=\{0\}$ and some are not, which we will elaborate on later.

    For the inclusion  ``$\supseteq$” direction, by Lemma \ref{MeetofU}, we have
    \begin{equation}
        U_{S\cap T_i}=U_S\cap U_{T_i}\subseteq U_S\ \cap \big(\sum_{i=1}^{m} U_{T_i} \big).
    \end{equation}
    Taking the sum over $i$,
    \begin{equation}
    U_S\ \cap \big(\sum_{i=1}^{m} U_{T_i} \big)\supseteq\sum_{i=1}^{m} U_{S\cap T_i}.
    \end{equation}
    
    Then it remains to prove the ``$\subseteq$" direction. For any $f\in U_S\ \cap \big(\sum_{i=1}^{m} U_{T_i} \big)$, we need to show that $f\in \sum_{i=1}^{m} U_{S\cap T_i}$. 
    Recalling the definition of the singleton extension, since $S\subsetneq V$, for any $\pi\in \Pi(S)$, there is always a $P_\pi=\widetilde{\pi}^V\in \Pi^\ast(V)$ with $P_\pi|_S=\pi$. According to condition 1), since $f\in U_S$, specifying the value of $f(P_\pi)$ for every $\pi\in \Pi(S)$ is sufficient to define $f$\footnote{Although condition 2) on $U_S$ stipulates that for any $P$ with $P|_S={\pi_0}_S$, $f(P)=0$, we will use it later, and include $\pi={\pi_0}_S$ in our discussion for now.}. 
    At the same time, $f\in \sum_{i=1}^{m} U_{T_i}$, so it can be written as $f=\sum_{i=1}^{m} g_i$, with $g_i\in U_{T_i}$. Again, condition 1) implies that for any $P\in \Pi^\ast(V)$, $g_i(P)$ depends only on $P|_{T_i}$. 
    Denoting $A_i=S\cap T_i$ for each $T_i$, we first divide the edge indices $i$ into two sets:
    \begin{equation}
        I_{\ge2}=\{i:|A_i|=|S\cap T_i|\ge 2\},\ \text{and}\ I_{\le1}=\{i:|A_i|=|S\cap T_i|\le 1\}.
    \end{equation}
    For every $g_i$ with $i\in I_{\ge2}$ and every $P_\pi$, 
    \begin{equation}g_i(P_\pi)=g_i(\widetilde{(\pi|_{A_i})}^{V})
    \end{equation}
    This is because the restriction of $P_\pi$  and $\widetilde{(\pi|_{A_i})}^{V}$ on $T_i$ both equal $\pi|_{A_i}$ on vertices in $A_i$, and consist of singleton blocks for vertices in $T_i/A_i$. 
    We also define $P_{1,V}$ as the ``singleton partition" of $V$, where every vertex in $V$ forms a single block. Hence similarly, in the case where $|A_i|\le1$ and $U_{S\cap{T_i}}=\{0\}$,
     \begin{equation}g_i(P_\pi)=g_i(P_{1,V}),
    \end{equation}
    because $T_i$ contains at most one vertex from $S$, and every other vertex forms a singleton block. 
    Therefore, \begin{equation}\label{fPpi}
    f(P_\pi)=\sum_{i\in I_{\ge2}} g_i(\widetilde{(\pi|_{A_i})}^{V})+\sum_{i\in I_{\le1}} g_i(P_{1,V}).
    \end{equation}
    Note that the second term is constant and independent of $P_\pi$. Thus, the right-hand side of (\ref{fPpi}) relies only on $\pi|_{A_i}$. Namely, for any $P\in \Pi^\ast(V)$ with $\pi=P|_S$, the value $f(P)=f(P_\pi)$ depends solely on $P|_{A_i}$. 
    Also, let $\mathbf {1}$ denote the all-ones function in $\mathbb R^{\Pi^\ast(V)}$. For $|A_i|\ge2$, $\sum_{\tau\in\Pi(A_i)}\mathbf 1_{A_i,\tau}=\mathbf 1$
    \footnote{This is because for any $P\in \Pi^\ast(V)$, there exists a unique partition $\tau=P|_{A_i}$. 
    Thus, $\sum_{\tau\in\Pi(A_i)}\mathbf 1_{A_i,\tau}(P)$ is always $1$.}. 
    Similar to the proof of conditions 1) and 2), $f$ can be written in the form:
    \begin{equation}\label{ffortau}
         f = \sum_{i\in I_{\ge2}} \sum_{\tau\in \Pi(A_i)}g_i(\widetilde{\tau}^{V})\mathbf{1}_{A_i, \tau}+\sum_{i\in I_{\le1}} g_i(P_{1,V})\mathbf{1}.
    \end{equation}
    Imposing condition 2) for $f\in U_S$, we have
    $f(\widetilde{{\pi_0}_{S}}^{V})=0$. Substituting this into (\ref{ffortau}) gives 
    \begin{equation}\label{trivial0}
    \sum_{i\in I_{\ge2}} g_i(\widetilde{{\pi_0}_{A_i}}^{V})+\sum_{i\in I_{\le1}} g_i(P_{1,V})=0.  
    \end{equation}
    Therefore, (\ref{trivial0}) leads to:
    \begin{align*}
        f&=  \sum_{i\in I_{\ge2}} \big(\sum_{\tau\in \Pi(A_i)}g_i(\widetilde{\tau}^{V})\mathbf{1}_{A_i, \tau}-g_i(\widetilde{{\pi_0}_{A_i}}^{V})\mathbf{1}\big)\\
        &=\sum_{i\in I_{\ge2}} \sum_{\tau\in \Pi(A_i)}\big(g_i(\widetilde{\tau}^{V})-g_i(\widetilde{{\pi_0}_{A_i}}^{V})\big)\mathbf{1}_{A_i, \tau}\\
        &=\sum_{i\in I_{\ge2}} \sum_{\tau\in \Pi^\ast(A_i)}\big(g_i(\widetilde{\tau}^{V})-g_i(\widetilde{{\pi_0}_{A_i}}^{V})\big)\mathbf{1}_{A_i, \tau}.
    \end{align*} 
    The last equality is because for $\tau=\widetilde{{\pi_0}_{A_i}}$, the coefficient $\big(g_i(\widetilde{\tau}^{V})-(g_i(\widetilde{{\pi_0}_{A_i}}^{V})\big)$ vanishes. Thus, all $\mathbf{1}_{A_i, {\pi_0}_{T_i}}$ have been eliminated, and $f\in \sum_{i\in I_{\ge2}} U_{A_i}=\sum_{i=1}^m U_{S\cap T_i}$, since $\sum_{i\in I_{\le1}} U_{A_i}$ contributes nothing. This completes the proof of the ``$\subseteq$" direction.
\end{proof}

Finally, we move on to prove (\ref{simplifiedthm1})\footnote{To avoid subtleties in the following proof, we demand that (\ref{simplifiedthm1}) includes the case where some $U_{e_i}$ can be $\{0\}$.}: 
\begin{equation*}
\operatorname{dim}\Big(\sum_{i=1}^{m}U_{e_i}\Big)
=\sum_{\emptyset\neq F\subseteq [m]}(-1)^{|F|+1}\operatorname{dim}\bigl(U_{\bigcap_{i\in F} e_i}\bigr).
\end{equation*}
We proceed by induction on $m$. First, when $m=1$, both sides give $\operatorname{dim}(\sum_{i=1}^{m}U_{e_i})$, and (\ref{simplifiedthm1}) holds. Then, assuming (\ref{simplifiedthm1}) to hold for $m-1$, we denote 
\begin{equation}
    A:=\sum_{i=1}^{m-1}U_{e_i},
\quad
B:=U_{e_m}
\end{equation}
and check the validity of (\ref{simplifiedthm1}) for the case of $m$. For the left-hand side, 
\begin{equation}
    \operatorname{dim}\Big(\sum_{i=1}^{m}U_{e_i}\Big)=\operatorname{dim}(A+B).
\end{equation}
As finite-dimensional subspaces, $A$ and $B$ satisfy
\begin{equation}\label{dimensionplus}
    \operatorname{dim}(A+B)=\operatorname{dim}(A)+\operatorname{dim}(B)-\operatorname{dim}(A\cap B).
\end{equation}
According to Lemma \ref{MeetofSumU}, 
\begin{equation}
    A\cap B=\big(\sum_{i=1}^{m-1}U_{e_i}\big)\cap U_{e_m}=\sum_{i=1}^{m-1}U_{e_i\cap e_m}.
\end{equation}
Applying the assumed (\ref{simplifiedthm1}) for $m-1$ to both $A$ and $A\cap B$, we obtain
\begin{align}
    &\operatorname{dim}(A)=\operatorname{dim}\Big(\sum_{i=1}^{m-1}U_{e_i}\Big)
=\sum_{\emptyset\neq F\subseteq [m-1]}(-1)^{|F|+1}\operatorname{dim}\bigl(U_{\bigcap_{i\in F} e_i}\bigr),\\
&\operatorname{dim}(A\cap B)=\operatorname{dim}\Big(\sum_{i=1}^{m-1}U_{e_i\cap e_m}\Big)
=\sum_{\emptyset\neq F\subseteq [m-1]}(-1)^{|F|+1}\operatorname{dim}\bigl(U_{\bigcap_{i\in F} (e_i\cap e_m)}\bigr).
\end{align}
In the latter expression, 
\begin{equation}
    \bigcap_{i\in F} (e_i\cap e_m)=(\bigcap_{i\in F} e_i)\cap e_m.
\end{equation}
Therefore, 
\begin{equation}
    \operatorname{dim}(A\cap B)=\sum_{\substack{|F|\ge2,m\in F\\ F\subseteq [m]}}(-1)^{|F|}\operatorname{dim}\Big(U_{\bigcap_{i\in F} e_i}\Big).
\end{equation}
Substituting these results into (\ref{dimensionplus}) yields:
\begin{align}
    \operatorname{dim}\Big(\sum_{i=1}^{m}U_{e_i}\Big)=&\sum_{\emptyset\neq F\subseteq [m-1]}(-1)^{|F|+1}\operatorname{dim}\bigl(U_{\bigcap_{i\in F} e_i}\bigr)+\operatorname{dim}(U_{e_m})\\
    &+\sum_{\substack{|F|\ge2,m\in F\\ F\subseteq [m]}}(-1)^{|F|+1}\operatorname{dim}\Big(U_{\bigcap_{i\in F} e_i}\Big).
\end{align}
We can see that the first term covers all subsets $F\subseteq[m]$ that do not contain $m$. The second term corresponds solely to the $F=\{m\}$ (for which $(-1)^{|F|+1}=1$), while the third term includes all subsets $F\subseteq[m]$ with $m\in F$ and $|F|\ge2$. Hence, the right-hand side is precisely $\sum_{\emptyset\neq F\subseteq [m]}(-1)^{|F|+1}\operatorname{dim}\bigl(U_{\bigcap_{i\in F} e_i}\bigr)$, and (\ref{simplifiedthm1})  holds for $m$. As a result, (\ref{simplifiedthm1}) is valid for all $m$, and the inclusion-exclusion formula for the rank of the partition-reduction matrix,
\begin{equation} \operatorname{rank}_{\mathbb{R}}\bigl(R(H)\bigr)=\sum_{\emptyset\neq F\subseteq[m]}(-1)^{|F|+1}(B_{k(F)}-1) \end{equation} is proved.

\section{Uniqueness of the equality set for a Sperner class}\label{appB}
This appendix presents the proof of  Theorem \ref{thm2}, which reveals the uniqueness of equalities that hold for a class of Sperner states. Namely, no different hypergraphs can share exactly the same set of linear equalities in MEMS$_n$.

In Section \ref{sec2}, we  reveal the capture of the linear dependence of the macroscopic coordinates by the equalities. Therefore, we can write:
\begin{equation}
    \operatorname{ker}({R(H})^{\mathsf T})=\mathrm{Col}(R(H))^{\perp},
\end{equation}
with the left being the kernel of the transposed partition-reduction matrix $R(H)^{\mathsf T}$ (or equivalently, the left null space) and the right being the orthogonal complement of the column space $\mathrm{Col}(R(H))$. This equation is a standard result from finite-dimensional linear algebra. Therefore, if for two hypergraphs $H_1$ and $H_2$, $\operatorname{ker}({R(H_1})^{\mathsf T})=\operatorname{ker}({R(H_2})^{\mathsf T})$, taking orthogonal complements gives
\begin{equation}\label{coleq}
    \mathrm{Col}(R(H_1))=\mathrm{Col}(R(H_2)).
\end{equation}
Meanwhile, according to Appendix \ref{appA}, by viewing the column space of partition-reduction matrices as a function space,  (\ref{coleq}) can now be written as:
\begin{equation}\label{Ueq}
    \sum_{e_i\in E_1}U_{e_i}=\sum_{e_j\in E_2}U_{e_j}.
\end{equation}
Based on this, we now propose a method to read off the set of hyperedges directly from the knowledge of $\sum_{e_i\in E}U_{e_i}$. As a result, for any given equality set, the hypergraph is unique. We begin with the following lemma:
\begin{lemma}\label{propersub}
    For any $S\subseteq V$ with $|S|\ge2$, and any $T_1,...,T_m\subsetneq S$, 
    \begin{equation}\label{propersubsetU}
        \sum_{i}^m U_{T_i}\subsetneq U_S.
    \end{equation}
\end{lemma}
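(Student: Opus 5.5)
Inclusion is immediate: by Lemma \ref{MeetofU}, each $U_{T_i}=U_{T_i\cap S}=U_{T_i}\cap U_S\subseteq U_S$, so $\sum_i U_{T_i}\subseteq U_S$. The whole content of the lemma is strictness, and we will prove it by exhibiting a single element of $U_S$ that lies outside $\sum_i U_{T_i}$. Concretely, we look for a linear functional $\lambda$ on $\mathbb{R}^{\Pi^\ast(V)}$ that annihilates every $U_{T}$ with $T\subsetneq S$ but does not annihilate $U_S$. This handles all $T_1,\dots,T_m$ simultaneously, and we do not need any inclusion-exclusion.

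The natural candidate is a Möbius-type alternating sum over partitions that agree outside $S$. Restrict attention to partitions of the form $P_\sigma=\widetilde{\sigma}^V$ for $\sigma\in\Pi(S)$; these are all nontrivial unless $S=V$, a case we treat below. For $g\in U_T$, condition 1) says $g(P_\sigma)$ depends only on $\sigma|_T$. We then need weights $c_\sigma$ on $\Pi(S)$ with $\sum_{\sigma:\sigma|_T=\tau}c_\sigma=0$ for every proper $T\subsetneq S$ and every $\tau\in\Pi(T)$, but $\sum_\sigma c_\sigma\mathbf 1_{S,\pi}(P_\sigma)=c_\pi\neq 0$ for some nontrivial $\pi$. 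It is enough to treat the maximal proper subsets $T=S\setminus\{v\}$, since any $T\subsetneq S$ lies in some $S\setminus\{v\}$ and the fibers over $T$ are unions of fibers over $S\setminus\{v\}$.

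For the weights, take $|S|=s$ and work on the singleton side. Let $\pi_1$ be the all-singletons partition of $S$, which is nontrivial since $s\ge2$. For a set of disjoint pairs, or more generally for a fixed perfect structure, define the signed sum over refinements. The cleanest construction is the following. Fix a labeling $v_1,\dots,v_s$ of $S$. For each $j\geq 2$, the vertex $v_j$ may either be a singleton or be merged into $v_{j-1}$'s block in a chain. Let $c_\sigma=\prod_j(\pm1)$ over the $2^{s-1}$ "interval" partitions of the path $v_1-\cdots-v_s$, with sign $(-1)^{\#\text{merges}}$, and set $c_\sigma=0$ for all other $\sigma$. Deleting $v$ destroys information about at least one adjacent merge. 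We then need to check that each fiber over $S\setminus\{v\}$ pairs the interval partitions into cancelling pairs that differ only in a merge involving $v$. For an end vertex this is clear. For an interior vertex $v_j$, the fiber fixes whether $v_{j-1},v_{j+1}$ are together, and the two merges at $v_j$ must both be on or both be off together with other configurations, so we need $\{$off,off$\}$ to cancel against one of $\{$on,off$\}$, $\{$off,on$\}$ individually. Since they restrict differently, this may fail. That is the main obstacle. If the path fails, we switch to the full Möbius function of the partition lattice, $c_\sigma=\mu(\sigma,\hat1)$ or $\mu(\hat0,\sigma)$, and verify fiber-cancellation via the standard identity for $\sum$ over partitions projecting to a given $\tau$ on $S\setminus\{v\}$. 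Those partitions are obtained by placing $v$ into a block of $\tau$ or as a new singleton, and $\mu(\hat0,\sigma)=\prod(-1)^{|b|-1}(|b|-1)!$ gives $\sum=\mu(\hat0,\tau)\big(1+\sum_{b\in\tau}(-|b|)\big)$, which is not obviously zero. So the weights have to be tuned. An alternative that avoids choosing weights is a dimension count: show that $\dim\sum_{v}U_{S\setminus\{v\}}<B_s-1$ using Theorem \ref{thm1}, already proved via (\ref{simplifiedthm1}), applied to the hypergraph $\{S\setminus\{v\}\}_{v\in S}$ inside $V$. The quantity to verify is $\sum_{j=1}^{s}(-1)^{j+1}\binom{s}{j}(B_{s-j}-1)<B_s-1$, which is equivalent to $\sum_{j=0}^{s}(-1)^j\binom{s}{j}B_{s-j}>0$. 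This sum counts partitions of an $s$-set with no singleton blocks by inclusion–exclusion, and it is at least $1$ for $s\ge2$ because the one-block partition has no singleton. This dimension route is the one we will actually write down.

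Finally, for $S=V$ the singleton extensions are unnecessary, but the counting formula for $\dim U_T$ and Lemma \ref{MeetofSumU} still apply, so the dimension argument goes through verbatim.
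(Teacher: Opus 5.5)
Your proof is correct, and the route you commit to, the dimension count, differs from the paper's. The paper transports $U_S$ to $\mathbb{R}^{\Pi^\ast(S)}$ via the isomorphism $\Phi_S$. It then exhibits an explicit nonzero vector $w(\pi)=(-1)^{s-|\pi|}(|\pi|-1)!$ orthogonal to every $\overline{\mathbf{1}_{T,\tau}}$ with $T\subsetneq S$ arbitrary, and checks the orthogonality with a Stirling-number/falling-factorial identity and the binomial theorem.

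You instead enlarge $\sum_i U_{T_i}$ to $\sum_{v\in S}U_{S\setminus\{v\}}$, which uses $U_T\subseteq U_{S\setminus\{v\}}$ for $T\subseteq S\setminus\{v\}$, again by Lemma \ref{MeetofU}. You then compute the dimension of this larger space with \eqref{simplifiedthm1}. Since $\bigcap_{v\in F}(S\setminus\{v\})=S\setminus F$, its codimension in $U_S$ is $\sum_{j=0}^{s}(-1)^j\binom{s}{j}B_{s-j}$. This is the number of singleton-free partitions of $S$, hence at least $1$. The argument is not circular, because \eqref{simplifiedthm1} rests only on Lemmas \ref{MeetofU} and \ref{MeetofSumU}, not on this lemma.

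Your route is short once Appendix A is in hand, and it yields the exact codimension ($1$ for $s=3$, $4$ for $s=4$). That number is precisely the per-subset count $\sum_j(-1)^j\binom{k}{j}B_{k-j}$ of new $k$-partite signals quoted in the main text. The paper's route is self-contained, independent of Lemma \ref{MeetofSumU} and the induction. It also produces an explicit witness, the generalization of the triangle combination \eqref{eq:triangle-linear-relation}.

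One of the candidates you abandoned actually works. $c_\sigma=\mu(\sigma,\hat 1)=(-1)^{|\sigma|-1}(|\sigma|-1)!$ is, up to an overall sign, exactly the paper's $w$. After your reduction to $T=S\setminus\{v\}$, its fiber cancellation is a two-term check: $v$ either joins one of the $b$ blocks of $\tau$ or forms a new singleton, giving $b\,(-1)^{b-1}(b-1)!+(-1)^{b}\,b!=0$. Combined with your correct remark that fibers over a smaller $T$ are unions of fibers over some $S\setminus\{v\}$, this gives a witness proof simpler than the paper's Stirling computation. You only tested $\mu(\hat 0,\sigma)$, whose fiber sum $(2-s)\,\mu(\hat 0,\tau)$ indeed does not vanish. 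In a final write-up, drop the interval-partition and $\mu(\hat 0,\sigma)$ explorations and keep either the dimension count or the $\mu(\sigma,\hat 1)$ witness.
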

\begin{proof}
    First we clear away some subtleties. When $|T_i|\le1$, $U_{T_i}=\{0\}$, and such terms do not affect the sum. Thus, it suffices to assume $|T_i|\ge 2$ for every $T_i$. Also, when $|S|=2$, all proper subsets $T_1,...,T_m\subsetneq S$ are either empty or singletons. In this case the left-hand-side of (\ref{propersubsetU}) is $\{0\}$ and (\ref{propersubsetU}) holds trivially. Therefore, it is enough to consider only $|S|\ge3$.

    To proceed with the proof, we introduce a linear isomorphism that transfers the inclusion (\ref{propersubsetU}) to a simpler family of spaces, where it can be established more easily.
    First recall the singleton extension $\widetilde{(\cdot)}^B: \Pi(A)\rightarrow\Pi(B)$ defined in Appendix \ref{appA}, which transforms a partition on $A\subseteq B$ into one on $B$ by adding the vertices in $B/A$ as singleton blocks. Note that $\widetilde{(\cdot)}^B$ is an injection.
    
    We then define a linear map $\Phi_S:U_S\rightarrow\mathbb{R}^{\Pi^\ast(S)}$: For any $f\in U_S$ and any $\pi\in \Pi^\ast(S)$,
    \begin{equation}
        \Phi_S(f)(\pi)=f(\widetilde{\pi}^V).
    \end{equation}
    In other words, $\Phi_S$ identifies a function on partitions on $V$ that depends only on their restriction to $S$ with a function on partitions on $S$, discarding the irrelevant information on $V/S$.
    To begin with, for any $\mathbf{1}_{S,\sigma}\in U_S$ and any $\pi\in \Pi^\ast(S)$,
    \begin{equation}
        \Phi_S(\mathbf{1}_{S,\sigma})(\pi)=\mathbf{1}_{S,\sigma}(\widetilde{\pi}^V)=\delta_{\sigma, \pi}=e_\pi(\sigma),
    \end{equation}
    where $\{e_\pi\}$ is the standard basis of $\mathbb{R}^{\Pi^\ast(S)}$. Therefore, $\Phi_S$ sends the basis $\{\mathbf{1}_{S,\sigma}\}$ of $U_S$ to the standard basis $\{e_\pi\}$ of $\mathbb{R}^{\Pi^\ast(S)}$, and is consequently a linear isomorphism.

    Moreover, we are also interested in the action of $\Phi_S$ on $U_{T_i}$. Note that when $T_i\subseteq S$, Lemma \ref{MeetofU} implies $U_{T_i}\subseteq U_S$, so that $\Phi_S$ can also act on $U_{T_i}$. Given any $T_i\subseteq S$ with $|T_i|\ge2$, any $\mathbf{1}_{T_i,\tau}\in U_{T_i}$ and any $\pi\in \Pi^\ast(S)$,
    \begin{equation}
        \Phi_S(\mathbf{1}_{T_i,\tau})(\pi)=\mathbf{1}_{T_i,\tau}(\widetilde{\pi}^V)=\delta_{\tau, \pi|_{T_i}}.
    \end{equation}
    We denote by $\overline{\mathbf{1}_{T_i,\tau}}\in \mathbb{R}^{\Pi^\ast(S)}$ the function defined by $\overline{\mathbf{1}_{T_i,\tau}}(\pi)=\delta_{\tau, \pi|_{T_i}}$ for any $\pi\in \Pi^\ast(S)$. Thus, $\Phi_S(\mathbf{1}_{T_i,\tau})=\overline{\mathbf{1}_{T_i,\tau}}$. Moreover, we introduce\footnote{Note the similarity between $\overline{U_{T_i}}, \overline{\mathbf{1}_{T_i,\tau}}$, and ${U_{T_i}}, {\mathbf{1}_{T_i,\tau}}$. The only difference is that the former definitions use $S$ in place of $V$.} 
    \begin{equation}
        \overline{U_{T_i}}=\operatorname{span}\{\overline{\mathbf{1}_{T_i,\tau}}:\tau \in \Pi^\ast (T_i)\}\subseteq \mathbb{R}^{\Pi^\ast(S)}.
    \end{equation}
    In summary, $\Phi_S$ maps $U_S$ to $\mathbb{R}^{\Pi^\ast(S)}$, and $U_{T_i}$ to $\overline{U_{T_i}}$. Therefore, by the properties of linear isomorphisms, to prove (\ref{propersubsetU}), it is sufficient to show:
    \begin{equation}\label{sufUTS}
        \sum_i^m\overline{U_{T_i}}\subsetneq\mathbb{R}^{\Pi^\ast(S)}.
    \end{equation}

    For (\ref{sufUTS}) to hold, it is enough to find a function $w\in\mathbb{R}^{\Pi^\ast(S)}$ that does not belong to $\sum_i^m\overline{U_{T_i}}$. 
    Define 
    \begin{equation}
        w(\pi)=(-1)^{s-|\pi|}(|\pi|-1)!\in \mathbb{R}^{\Pi^\ast(S)},\ \pi\in \Pi^\ast(S),
    \end{equation}
    where we denote $s=|S|$, and $|\pi|$ is the number of partition blocks of $\pi$. First, we observe that $w\neq0$, because when $|S|\ge3$, there always exists a nontrivial bipartition $\pi\in \Pi^\ast(S)$ such that $w=(-1)^{s-2}\neq0$. 
    Consider the standard inner product on $\mathbb{R}^{\Pi^\ast(S)}$:
    \begin{equation}
        \langle a,b\rangle=\sum_{\pi\in\Pi^\ast(S)}a(\pi)b(\pi).
    \end{equation}
    We claim that for any $T_i\subsetneq S$ with $|T_i|\ge2$ and any $\tau\in\Pi^\ast(T_i)$,
    \begin{equation}\label{wperp}
        \langle w,\overline{\mathbf{1}_{T_i,\tau}}\rangle=\sum_{\pi\in\Pi^\ast(S)}w(\pi)\overline{\mathbf{1}_{T_i,\tau}}(\pi)=\sum_{\substack{\pi\in\Pi^\ast(S),\\\pi|_{T_i}=\tau}}(-1)^{s-|\pi|}(|\pi|-1)!=0,
    \end{equation}
    where the second equation follows from the definition of $\overline{\mathbf{1}_{T_i,\tau}}$.

    To prove (\ref{wperp}), the summation should first be written in a more explicit form. We denote $t=|T_i|$, $r=|S/T_i|=s-t\ge1$, and $b=|\tau|$ as the number of partition blocks in the partition $\tau$. Note that any $\pi\in\Pi^\ast(S)$ with $\pi|_{T_i}=\tau$ can be uniquely constructed in the following way:

    First, divide the vertex set $S/T_i$ into two parts: $U$ and $(S/T_i)/U$. The vertices in $U$ are partitioned into $j$ blocks, which remain independent blocks of $\pi$ disjoint from those in $\tau$. On the other hand, each vertex in $(S/T_i)/U$ is attached to one of the $b$ blocks in $\tau$. Thus, we obtain all partition blocks of $\pi$ that contain vertices both inside and outside $T_i$. With $\tau$ fixed, this method allows us to find every $\pi$ satisfying $\pi|_{T_i}=\tau$. Denoting $u=|U|$, the number of such $\pi$ with $|\pi|=b+j$ is:
    \begin{equation}\label{countpi}
        \binom{r}{u}\left\{ {u \atop j} \right\}b^{r-u},\ 0\le j\le u\le r,
    \end{equation}
    where $\left\{ {u \atop j} \right\}$ is the Stirling number of the second kind.
    Substituting (\ref{countpi}) into (\ref{wperp}) gives:\footnote{As long as $\tau$ is nontrivial, the $\pi$ thus obtained must be nontrivial and belong in $\Pi^\ast(S)$.}
    \begin{align}
        \langle w,\overline{\mathbf{1}_{T_i,\tau}}\rangle&=\sum_{u=0}^r\sum_{j=0}^u\binom{r}{u}\left\{ {u \atop j} \right\}b^{r-u}(-1)^{s-(b+j)}(b+j-1)!\\
        &=\sum_{u=0}^r\binom{r}{u}b^{r-u}(-1)^{s-b}\sum_{j=0}^u\left\{ {u \atop j} \right\}(-1)^{j}(b+j-1)!,
    \end{align}
    where we use $(-1)^{-j}=(-1)^j$ in the second line. Also, noting that
    \begin{equation}
        (b+j-1)!=(b+j-1)(b+j-2)...b(b-1)!,
    \end{equation}
    the inner sum can be written as:
    \begin{align}
        &\sum_{j=0}^u\left\{ {u \atop j} \right\}(-1)^{j}(b+j-1)!=(b-1)!\sum_{j=0}^u\left\{ {u \atop j} \right\}(-1)^{j}(b+j-1)(b+j-2)...b\\
        &=(b-1)!\sum_{j=0}^u\left\{ {u \atop j} \right\}(-b)(-b-1)...(-b-j+1)=(b-1)!\sum_{j=0}^u\left\{ {u \atop j} \right\}(-b)_j,
    \end{align}
    where $(x)_j=x(x-1)...(x-j+1)$ is the fallen factorial. By the standard identity that expresses ordinary powers in the basis of falling factorials, $x^u=\sum_{j=0}^u\left\{ {u \atop j} \right\}(x)_j$. Therefore, 
    \begin{align}
        &\quad \quad \quad \quad \langle w,\overline{\mathbf{1}_{T_i,\tau}}\rangle=\sum_{u=0}^r\binom{r}{u}b^{r-u}(-1)^{s-b}(b-1)!(-b)^u\\
        &=(-1)^{s-b}(b-1)!\sum_{u=0}^r\binom{r}{u}b^{r-u}(-b)^u=(-1)^{s-b}(b-1)!(b+(-b))^r=0
    \end{align}
    because $r=|S/T_i|\ge1$. As a result, (\ref{wperp}) is proved. Since $T_i$ and $\tau$ are arbitrary, $w$ is orthogonal to every $\overline{U_{T_i}}$, hence
    \begin{equation}
        w\perp\sum_{i=1}^m \overline{U_{T_i}}.
    \end{equation}
    As $w\neq 0$, it follows from properties of the inner product that $\sum_{i=1}^m\overline{U_{T_i}}$ cannot equal $\mathbb{R}^{\Pi^\ast(S)}$, so (\ref{sufUTS}) holds. 
    Finally, applying $\Phi_S^{-1}$ to (\ref{sufUTS}) yields:
    \begin{equation}
        \sum_{i}^m U_{T_i}\subsetneq U_S.
    \end{equation}
\end{proof}

We then show the method to uniquely obtain $E$ directly from $\mathrm{Col}(R(H))$. Since the specific hyperedges are unknown, we denote $W(H)=\sum_{e_i\in E}U_{e_i}=\mathrm{Col}(R(H))$.
First of all, we introduce a collection of vertex sets:
\begin{equation}
    D(H)=\{S\subseteq V: |S|\ge2 \ \text{and}\ U_S\subseteq W(H)\}.
\end{equation}
We will argue that this is exactly the vertex set collection:
\begin{equation}
    D'(H)=\{S\subseteq V: |S|\ge2 \ \text{and}\ S\subseteq e_i \ \text{for some}\ e_i\in E\}.
\end{equation}
To begin with, $D(H)\supseteq D'(H)$ is immediate. By Lemma \ref{MeetofU}, if $S\subseteq e_i$ for some $e_i\in E$, then $U_S=U_{S\cap e_i}\subseteq U_{e_i}\subseteq W(H)$. Considering the opposite direction, for any $S\in D(H)$, $U_S\subseteq W(H)$, which leads to 
\begin{equation}
    U_S\cap W(H)=U_S\cap (\sum_{e_i\in E}U_{e_i})=U_S.
\end{equation}
According to Lemma \ref{MeetofSumU}, $U_S\cap (\sum_{e_i\in E}U_{e_i})=\sum_{e_i\in E}U_{S\cap e_i}$. Therefore,
\begin{equation}\label{Uscap}
    U_S=\sum_{e_i\in E}U_{S\cap e_i}.
\end{equation}
Notice that if $S$ is not contained in any $e_i$, every $S\cap e_i\subsetneq S$. By Lemma \ref{propersub}, this implies 
\begin{equation}
    \sum_{e_i\in E}U_{S\cap e_i}\subsetneq U_S,
\end{equation}
which contradicts with (\ref{Uscap}). Consequently, $S$ must be contained in some $e_i$, and we also have $D(H)\subseteq D'(H)$. Combining both directions, we conclude $D(H)= D'(H)$.
Now, for any given $W(H)$, the vertex set collection $D(H)$ can be naturally constructed. In fact, the maximal elements in $D(H)$ under inclusion are exactly the hyperedges we are searching for. Indeed, for every $e_i\in E$, $U_{e_i}\subseteq W(H)$, so $e_i\in D(H)$. Meanwhile, since $D(H)=D'(H)$, the maximal elements of $D(H)$ are also subsets some $e_i\in E$. Thus, these elements are precisely the hyperedges. Combining them into an antichain yields our desired set $E$.

In summary, for any given $\operatorname{ker}({R(H})^{\mathsf T})=\mathrm{Col}(R(H))^\perp$, its orthogonal complement $\mathrm{Col}(R(H))=\sum_{e_i\in E}U_{e_i}=W(H)$ is unique, and using the method above, the set of hyperedges $E$ can be uniquely recovered from $W(H)$. Therefore, if $\operatorname{ker}({R(H_1})^{\mathsf T})=\operatorname{ker}({R(H_2})^{\mathsf T})$, the hyperedges thus obtained must satisfy $E_1=E_2$, and the corresponding hypergraphs are identical. In other words, the same set of linear equalities cannot be shared by two different hypergraphs. 
\section{The union and intersection of
geometric objects and hypergraphs}\label{appC}
In this appendix we prove Theorem \ref{thm3} and Theorem \ref{thm4}, which give a direct description of the geometric objects corresponding to the union and the intersection of two hypergraphs. Due to the equivalence between these geometric objects and linear equalities, we present our proof in the language of the latter.

\subsection{Proof of Theorem \ref{thm3}}
First of all, Theorem \ref{thm3} equates the number of equalities of the union hypergraph $H_{a\lor b}$ to the set of equalities shared by the two separate hypergraphs $H_a,H_b$.
To prove this, we again reformulate the discussion in terms of column spaces. Following the setup in Appendix \ref{appB}, Theorem \ref{thm3} can be rephrased as:
\begin{equation}\label{thm3reph}
\mathrm{Col}(R(H_{a\vee b}))^{\perp}=\mathrm{Col}(R(H_{a}))^{\perp}\cap \mathrm{Col}(R(H_{b}))^{\perp}.
\end{equation}
For the left-hand side, we should first note that the column space
\begin{equation}\label{Colunion}
    \mathrm{Col}(R(H_{a\vee b}))=\mathrm{Col}(R(H_{ab})),
\end{equation}
where $H_{ab}$ is the simple union of $H_a$ and $H_b$ without removing the redundant edges\footnote{Although such $H_{ab}$ is not a proper hypergraph, the definitions and properties of the  corresponding $U_{e_j}$ with ${e_j}\in E_{a}\cup E_{b}$ can be extended to this case.}. 
Thus, 
\begin{equation}
\mathrm{Col}(R(H_{a\vee b}))=\mathrm{Col}(R(H_{ab}))=\sum_{e\in E_{a}\cup E_{b}}U_e=(\sum_{e\in E_{a}}U_e)+(\sum_{e\in E_{b}}U_e).
\end{equation}
Taking the orthogonal complement, we obtain
\begin{equation}
\mathrm{Col}(R(H_{a\vee b}))^\perp=\big ((\sum_{e\in E_{a}}U_e)+(\sum_{e\in E_{b}}U_e)\big)^\perp=(\sum_{e\in E_{a}}U_e)^{\perp}\cap (\sum_{e\in E_{b}}U_e)^{\perp}=\mathrm{Col}(R(H_{a}))^{\perp}\cap \mathrm{Col}(R(H_{b}))^{\perp},
\end{equation}
where the second equation follows directly from the standard property of orthogonal complements. This completes the proof of Theorem \ref{thm3}.

\subsection{Proof of Theorem \ref{thm4}}
We now turn to the equalities corresponding to the intersection of two hypergraphs. Theorem \ref{thm4} states that these equalities are precisely the ones appearing in either of the Sperner classes. 
Using the same notation as the previous proof, the theorem can be expressed as:
\begin{equation}
    \mathrm{Col}(R(H_{a\land b}))^\perp=\mathrm{Col}(R(H_{a}))^{\perp}+ \mathrm{Col}(R(H_{b}))^{\perp}.
\end{equation}
From the standard relation $Y^{\perp}+Z^{\perp}=(Y\cap Z)^{\perp}$ for subspaces $Y$, $Z$, we have       
\begin{equation}
    \mathrm{Col}(R(H_{a}))^{\perp}+ \mathrm{Col}(R(H_{b}))^{\perp}=(\mathrm{Col}(R(H_{a}))\cap \mathrm{Col}(R(H_{b})))^{\perp}.
\end{equation}
Hence, to prove Theorem \ref{thm4}, it suffices to show
\begin{equation}\label{thm4col}
    \mathrm{Col}(R(H_{a\land b}))=\mathrm{Col}(R(H_{a}))\cap \mathrm{Col}(R(H_{b})),
\end{equation}
which requires us to first introduce a more general version of Lemmas \ref{MeetofU} and \ref{MeetofSumU}:
\begin{lemma}\label{MeetofDoublesum}
    \begin{equation}\label{Meet2sum}
        (\sum_{i=1}^mU_{S_i})\cap(\sum_{j=1}^lU_{T_j})=\sum_{i=1}^m\sum_{j=1}^lU_{S_i\cap T_j}.
    \end{equation}
\end{lemma}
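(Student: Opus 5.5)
The plan is to prove the ``$\supseteq$'' direction directly, and the ``$\subseteq$'' direction by induction on $l$ using the distributive-type identity of Lemma \ref{MeetofSumU}.

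\emph{Direction ``$\supseteq$''.} For each pair $(i,j)$, Lemma \ref{MeetofU} gives
\[
U_{S_i\cap T_j}=U_{S_i}\cap U_{T_j}.
\]
This space lies in both $\sum_i U_{S_i}$ and $\sum_j U_{T_j}$. Summing over all pairs shows that the right-hand side of (\ref{Meet2sum}) is contained in the left-hand side.

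\emph{Direction ``$\subseteq$''.} Linear subspaces are not distributive in general, so we cannot simply split the intersection over the summands. Instead we exploit the concrete structure of the $U_S$. The base case $l=1$ is exactly Lemma \ref{MeetofSumU}, applied with $S=T_1$ and the family $\{S_i\}$.

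For the inductive step, the naive approach would try to show
\[
X\cap(Y+Z)=X\cap Y+X\cap Z, \qquad X=\sum_i U_{S_i},\ Y=\sum_{j<l}U_{T_j},\ Z=U_{T_l}.
\]
This is exactly the distributivity that can fail, so I would not rely on it. I would instead generalize the proof of Lemma \ref{MeetofSumU} to a sum in place of the single space $U_S$.

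Take $f$ in the left-hand side and write it in two ways:
\[
f=\sum_i g_i=\sum_j h_j, \qquad g_i\in U_{S_i},\ h_j\in U_{T_j}.
\]
The goal is to choose these representatives cleverly and then project. The natural tool is a family of ``restriction'' projections: for $A\subseteq V$, let $p_A$ send a function $f$ to $P\mapsto f(\widetilde{P|_A}^V)$, suitably normalized so that its value vanishes on partitions trivial on $A$. This map sends each $U_S$ onto $U_{S\cap A}$, and it is the identity on $U_A$, by the same singleton-extension computation used in Lemma \ref{MeetofSumU}.

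\emph{Construction of the decomposition.} I would aim for an inclusion–exclusion (Möbius-type) decomposition of the identity on $\sum_i U_{S_i}$ built from the projections $p_{S_i}$ and their composites. Because the $p_A$ commute and satisfy $p_Ap_B=p_{A\cap B}$ (this needs checking), one can define the operator
\[
Q=1-\prod_i(1-p_{S_i}).
\]
I would then verify that $Q$ fixes every element of $\sum_i U_{S_i}$. Applying $Q$ to $f=\sum_j h_j$ gives
\[
f=Qf=\sum_j Qh_j .
\]
Here each $Qh_j$ is a signed sum of $p_{\bigcap_{i\in F}S_i}h_j$, which lies in $U_{(\bigcap_{i\in F} S_i)\cap T_j}\subseteq U_{S_{i_0}\cap T_j}$ for any $i_0\in F$. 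This places $f$ in the right-hand side.

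\emph{Main obstacle.} The crux is establishing the projection algebra. Specifically, we need:
\begin{itemize}
\item that $p_A$ maps $U_S$ into $U_{S\cap A}$ and restricts to the identity on $U_A$;
\item that $p_Ap_B=p_{A\cap B}$;
\item that these properties survive the subtraction used to enforce condition 2) (vanishing on trivial restrictions).
\end{itemize}
The normalization issue for condition 2) is delicate, as it was in (\ref{trivial0}). I expect to handle it by working first on the augmented space spanned by all $\mathbf{1}_{S,\pi}$ with $\pi\in\Pi(S)$, including trivial partitions, where the $p_A$ are honest commuting idempotents. Afterwards I would quotient by constants or subtract the value at the singleton partition $P_{1,V}$, mirroring the argument of Lemma \ref{MeetofSumU}.

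The case $S_i=V$ (or $A=V$), where the singleton extension is unavailable, will be treated separately, as in the earlier lemmas.
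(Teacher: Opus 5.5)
Your ``$\supseteq$'' direction is fine, and the commuting-projection idea for ``$\subseteq$'' is sound. However, the proposal stops exactly at the step you call the crux, and the fallback you sketch would not close it.

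\textbf{The gap.} With the unnormalized $p_A$, the image of $U_S$ is $U_{S\cap A}+\mathbb{R}\mathbf 1$ rather than $U_{S\cap A}$ whenever $S\not\subseteq A$. So the $Q$-argument on the augmented spaces only gives $f\in\sum_{i,j}U_{S_i\cap T_j}+\mathbb{R}\mathbf 1$. Removing that constant requires knowing that $\mathbf 1\notin\sum_iU_{S_i}$ unless some $S_i=V$. None of your suggestions supplies this:
\begin{itemize}
\item Quotienting by constants throws away exactly this information.
\item ``Mirroring Lemma~\ref{MeetofSumU}'' fails. That proof uses the single partition $\widetilde{{\pi_0}_S}^V$ at which all of $U_S$ vanishes, but $\sum_iU_{S_i}$ in general has no common zero (e.g.\ for the four $3$-subsets of a $4$-set).
\item Subtracting $f(P_{1,V})$ acts on $U_A$ as $f\mapsto f-f(P_{1,V})\mathbf 1$. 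That is not the identity, so your first bullet fails for it.
\end{itemize}

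\textbf{The fix.} Use the normalization you first wrote down. For $A\subsetneq V$ set $(q_Af)(P)=f(\widetilde{P|_A}^V)-f(\widetilde{{\pi_0}_A}^V)$, and set $q_V=\mathrm{id}$.
\begin{itemize}
\item Then $q_A=0$ for $|A|\le 1$, and by conditions 1), 2) of Appendix~\ref{appA}, $q_A$ maps into $U_A$ and fixes $U_A$.
\item Restricting $\widetilde{P|_A}^V$ (resp.\ $\widetilde{{\pi_0}_A}^V$) to $B$ and re-extending gives $\widetilde{P|_{A\cap B}}^V$ (resp.\ $\widetilde{{\pi_0}_{A\cap B}}^V$). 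Hence the two copies of $f(\widetilde{{\pi_0}_B}^V)$ cancel in $q_Aq_Bf$, and $q_Aq_B=q_{A\cap B}$ on all of $\mathbb{R}^{\Pi^\ast(V)}$.
\item So all three items on your ``main obstacle'' list hold for $q_A$.
\end{itemize}
Then $Q=1-\prod_i(1-q_{S_i})$ fixes $\sum_iU_{S_i}$. Write $S_F=\bigcap_{i\in F}S_i$. For $h_j\in U_{T_j}$,
$q_{S_F}h_j=q_{S_F}q_{T_j}h_j=q_{S_F\cap T_j}h_j\in U_{S_F\cap T_j}\subseteq U_{S_{i_0}\cap T_j}$ for any $i_0\in F$.
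Your argument then closes with no augmented space, no constant to remove, and no separate $S_i=V$ case.

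\textbf{Comparison with the paper.} The paper inducts on $m$ and proves precisely the distributive identity you set aside: $(Y+W)\cap X=(Y\cap X)+(W\cap X)$ with $Y=\sum_{i<m}U_{S_i}$, $W=U_{S_m}$, $X=\sum_jU_{T_j}$. Applying Lemma~\ref{MeetofSumU} to the combined family $\{S_i\}_{i<m}\cup\{T_j\}$ gives $W\cap(Y+X)=(W\cap Y)+(W\cap X)$, and a short modular-law manipulation finishes. So distributivity does hold for these particular subspaces, and it costs only a few lines once Lemma~\ref{MeetofSumU} is available.

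Your corrected route is more self-contained and more conceptual. The $U_A$ become images of a commuting family of idempotents closed under $A\cap B$, so Lemmas~\ref{MeetofU} and \ref{MeetofSumU} drop out as special cases. Moreover, since $Q$ is an idempotent with image $\sum_iU_{e_i}$, computing $\operatorname{tr}Q$ gives the inclusion--exclusion formula of Theorem~\ref{thm1} in one line.
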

\begin{proof}
    We prove this lemma by induction on $m$. First, note that for $m=1$, (\ref{Meet2sum}) reduces to $U_{S_i}\cap(\sum_{j=1}^lU_{T_j})=\sum_{j=1}^lU_{S_i\cap T_j}$, which is precisely Lemma \ref{MeetofSumU}.
    Now, assume that (\ref{Meet2sum}) is valid for $m-1$, and we will prove that it also holds for $m$.
    Denote
    \begin{equation}
        V=\sum_{i=1}^{m-1}U_{S_i},\ \ \ \ W=U_{S_m},\ \ \text{and}\ \ X=\sum_{j=1}^{l}U_{T_j}.
    \end{equation}
    Then the left-hand side of (\ref{Meet2sum}) can be written as $(V+W)\cap X$, while the right-hand side is:
    \begin{equation}
        \sum_{i=1}^m\sum_{j=1}^lU_{S_i\cap T_j}=\sum_{i=1}^{m-1}\sum_{j=1}^lU_{S_i\cap T_j}+\sum_{j=1}^lU_{S_m\cap T_j}=(V\cap X)+(W\cap X).
    \end{equation}
    For the last equation to hold, we apply respectively the induction hypothesis (\ref{Meet2sum}) for $m-1$, and Lemma  \ref{MeetofSumU} to the first and second terms. Thus, to complete the proof for Lemma  \ref{MeetofDoublesum}, it remains to show
    \begin{equation}\label{VWX}
        (V+W)\cap X=(V\cap X)+(W\cap X).
    \end{equation}
    The ``$\supseteq$" direction is immediate, since both $V\cap X$ and $W\cap X$ are contained in $(V+W)\cap X$. Meanwhile, for the opposite ``$\subseteq$" direction, we choose any $f\in(V+W)\cap X$, and try to prove $f\in(V\cap X)+(W\cap X)$. First, since $f\in V+W$, it can be decomposed as
    \begin{equation}
        f=v+w, \ v\in V,\ w\in W.
    \end{equation}
    Because $f\in X$, it follows that $w=f-v\in X+V$. As a result, 
    \begin{equation}
        w\in W\cap(V+X)=U_{S_m}\cap(\sum_{i=1}^{m-1}U_{S_i}+\sum_{j=1}^{l}U_{T_j}).
    \end{equation}
    Applying Lemma \ref{MeetofSumU} twice, the right-hand-side can be written as
     \begin{align}
         U_{S_m}\cap(\sum_{i=1}^{m-1}U_{S_i}+\sum_{j=1}^{l}U_{T_j})&=\sum_{i=1}^{m-1}U_{S_m}\cap U_{S_i}+\sum_{j=1}^{l}U_{S_m}\cap U_{T_j}\\ &=U_{S_m}\cap(\sum_{i=1}^{m-1} U_{S_i})+U_{S_m}\cap(\sum_{j=1}^{l} U_{T_j})\\
         &=W\cap V+W\cap X.
     \end{align}
    Therefore, we can write $w=w_V+w_X$, with $w_V\in W\cap V$ and $w_X\in W\cap X$ respectively. Meanwhile, $f=v+w=v+w_V+w_X$. Since both $f$ and $w_X$ belong to $X$, we must also have $v+w_V\in X$. Consequently, $v+w_V\in (V\cap X)$ and  $w_X\in (W\cap X)$, which implies $f\in (V\cap X)+(W\cap X)$.
    Combining the two inclusion directions, we have verified (\ref{VWX}), and therefore Lemma \ref{MeetofDoublesum} is proved\footnote{Note that here in the proof we use only Lemma \ref{MeetofSumU}, which is valid when any $U_S$, $U_{T_i}$ and $U_{S\cap T_i}$ are $\{0\}$. Therefore, if any $U_{S_i}$, $U_{T_j}$ or $U_{S_i\cap T_j}$ in Lemma \ref{MeetofDoublesum} are $\{0\}$, our argument still holds.}.
\end{proof}

Substituting $S_i=e_i\in E_a$ and $T_j=e_j\in E_b$ into Lemma \ref{MeetofDoublesum}, we obtain:
\begin{equation}\label{thm4U}
    (\sum_{e_i\in E_a}U_{e_i})\cap(\sum_{e_j\in E_b}U_{e_j})=\sum_{e_i\in E_a}\sum_{e_j\in E_b}U_{e_i\cap e_j}.
\end{equation}
The left-hand-side is exactly $\mathrm{Col}(R(H_{a}))\cap \mathrm{Col}(R(H_{b}))$, while the right-hand-side corresponds to the hypergraph whose hyperedges are the nontrivial
pairwise intersections of those in $E_a$ and $E_b$\footnote{When $|e_i\cap e_j|\le1$, $U_{e_i\cap e_j}=\{0\}$, so such intersections can be ignored.  }. This directly gives $\mathrm{Col}(R(H_{a\land b}))$. Thus, (\ref{thm4U}) confirms (\ref{thm4col}), and Theorem \ref{thm4} is proved.

\section{Examples of the equalities and hypergraphs for specific Sperner classes}\label{appD}
In this appendix, we introduce two classes of Sperner states and present their hypergraphs as well as the universal procedures to derive their corresponding equalities. We begin with the two hypergraphs: the $n=4$ 3-uniform-complete hypergraph and the $n=4$ 2-uniform-complete hypergraph, as shown in FIG \ref{example}.
\begin{figure} [h]
    \centering 
    \includegraphics[width=0.5\textwidth]{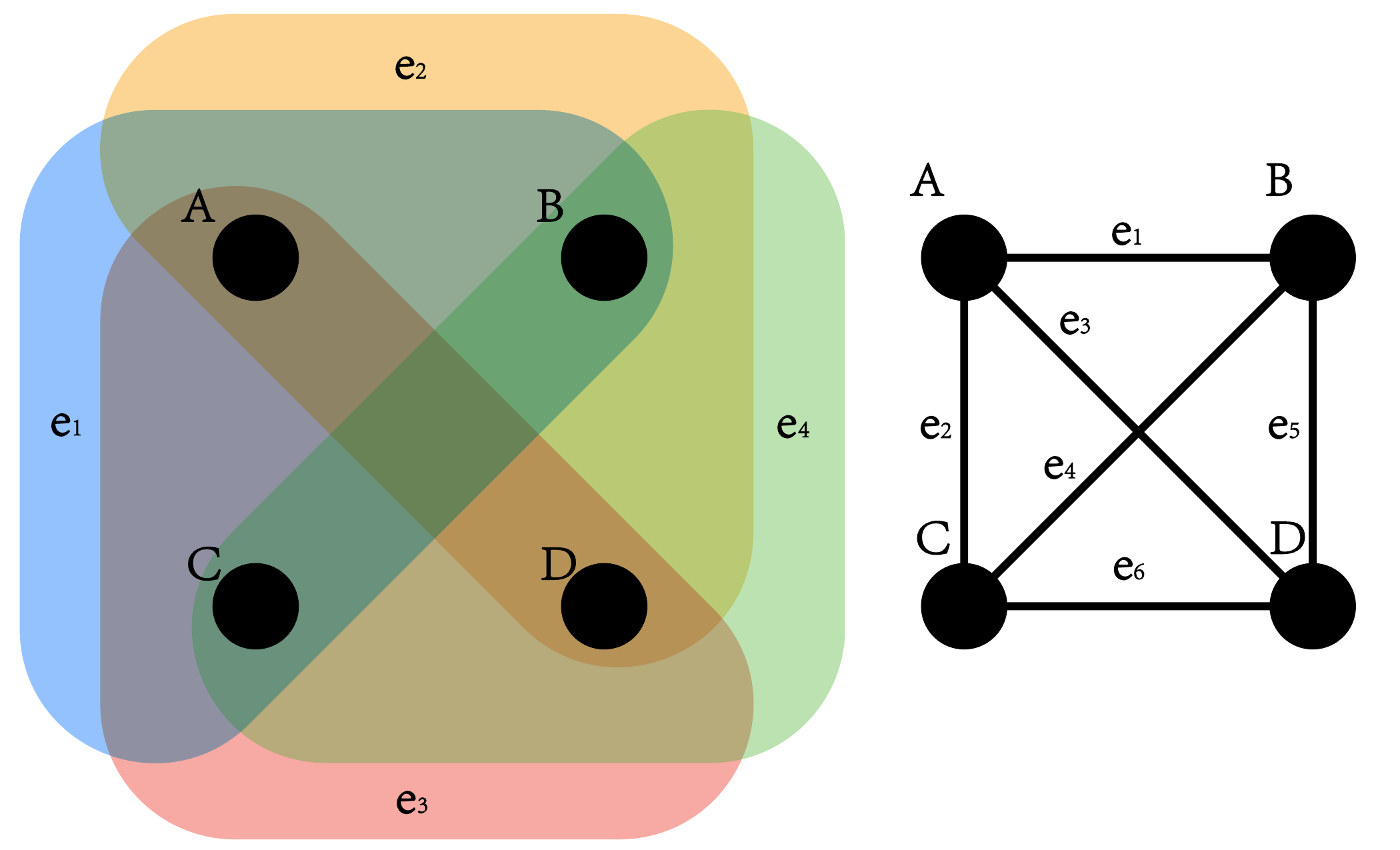} 
   \caption{Left: $n=4$ 3-uniform-complete hypergraph, each 3-hyperedge is represented as the colorful shaded area covering three black dots (vertices). Right: $n=4$ 2-uniform-complete hypergraph, each 2-hyperedge (edge) is represented as a line connecting two black dots (vertices).}
    \label{example} 
\end{figure}

The Sperner states associated with the $n=4$ 3-uniform-complete hypergraph, up to LU transformations, can be written as $\rho=\rho_{A_1B_2C_3}\otimes\rho_{B_1C_2D_3}\otimes\rho_{C_1D_2A_3}\otimes\rho_{D_1A_2B_3}$. Similarly, for the $n=4$ 2-uniform-complete hypergraph, the Sperner states are LU-equivalent to $\rho=\rho_{A_1B_3}\otimes\rho_{A_2C_2}\otimes\rho_{D_1A_3}\otimes\rho_{B_1C_3}\otimes\rho_{B_2D_2}\otimes\rho_{C_1D_3}$.
By Theorem \ref{thm1}, the codimension of their corresponding geometrical objects in MEMS$_n$ are $4$ and $8$ respectively. The partition-reduction matrix associated with the $n=4$ 3-uniform-complete hypergraph is shown in Table \ref{example43}.
\begin{table*}[ht]
\centering
\begin{tabular}{|c|c|c|c|c|c|c|c|c|c|c|c|c|c|c|c|c|}
\hline
$E^{(2)}_{(ABC|D)}$&$0$&$0$&$0$&$0$&$1$&$0$&$0$&$0$&$1$&$0$&$0$&$0$&$1$&$0$&$0$&$0$\\
\hline
$E^{(2)}_{(ABD|C)}$&$1$&$0$&$0$&$0$&$0$&$0$&$0$&$0$&$0$&$1$&$0$&$0$&$0$&$1$&$0$&$0$\\
\hline
$E^{(2)}_{(AB|CD)}$&$1$&$0$&$0$&$0$&$1$&$0$&$0$&$0$&$0$&$0$&$1$&$0$&$0$&$0$&$1$&$0$\\
\hline
$E^{(3)}_{(AB|C|D)}$&$1$&$0$&$0$&$0$&$1$&$0$&$0$&$0$&$0$&$0$&$0$&$1$&$0$&$0$&$0$&$1$\\
\hline
$E^{(2)}_{(B|ACD)}$&$0$&$1$&$0$&$0$&$0$&$1$&$0$&$0$&$0$&$0$&$0$&$0$&$0$&$0$&$1$&$0$\\
\hline
$E^{(2)}_{(AC|BD)}$&$0$&$1$&$0$&$0$&$0$&$0$&$1$&$0$&$1$&$0$&$0$&$0$&$0$&$1$&$0$&$0$\\
\hline
$E^{(3)}_{(AC|B|D)}$&$0$&$1$&$0$&$0$&$0$&$0$&$0$&$1$&$1$&$0$&$0$&$0$&$0$&$0$&$0$&$1$\\
\hline
$E^{(2)}_{(AD|BC)}$&$0$&$0$&$1$&$0$&$0$&$1$&$0$&$0$&$0$&$1$&$0$&$0$&$1$&$0$&$0$&$0$\\
\hline
$E^{(2)}_{(A|BCD)}$&$0$&$0$&$1$&$0$&$0$&$0$&$1$&$0$&$0$&$0$&$1$&$0$&$0$&$0$&$0$&$0$\\
\hline
$E^{(3)}_{(BC|A|D)}$&$0$&$0$&$1$&$0$&$0$&$0$&$0$&$1$&$0$&$0$&$0$&$1$&$1$&$0$&$0$&$0$\\
\hline
$E^{(3)}_{(AD|B|C)}$&$0$&$0$&$0$&$1$&$0$&$1$&$0$&$0$&$0$&$1$&$0$&$0$&$0$&$0$&$0$&$1$\\
\hline
$E^{(3)}_{(BD|A|C)}$&$0$&$0$&$0$&$1$&$0$&$0$&$1$&$0$&$0$&$0$&$0$&$1$&$0$&$1$&$0$&$0$\\
\hline
$E^{(3)}_{(CD|A|B)}$&$0$&$0$&$0$&$1$&$0$&$0$&$0$&$1$&$0$&$0$&$1$&$0$&$0$&$0$&$1$&$0$\\
\hline
$E^{(4)}_{(A|B|C|D)}$&$0$&$0$&$0$&$1$&$0$&$0$&$0$&$1$&$0$&$0$&$0$&$1$&$0$&$0$&$0$&$1$\\
\hline
\end{tabular}
\caption{The partition-reduction matrix associated with the $n=4$ 3-uniform-complete hypergraph.}
\label{example43}
\end{table*}

First, the macroscopic quantities, displayed as the row labels, are the measures evaluated on the entire 4-party state, corresponding to all bipartitons, tripartitions, and the single 4-partition of the vertices $A$, $B$, $C$ and $D$. Together, the values of these measures form $\textbf{V}^\mathrm{macro}$. Meanwhile, the columns are indexed by the microscopic quantities. For each 3-hyperedge, the measures for its three bipartitions and one tripartition are included among these quantities. Their values constitute $\textbf{V}^\mathrm{micro}$.
Specifically in the matrix, the columns correspond respectively to: ${e_1^{(2)}}_{(C|AB)}$, ${e_1^{(2)}}_{(B|AC)}$, ${e_1^{(2)}}_{(A|BC)}$, ${e_1^{(3)}}_{(A|B|C)}$, ${e_2^{(2)}}_{(D|AB)}$, ${e_2^{(2)}}_{(B|AD)}$, ${e_2^{(2)}}_{(A|BD)}$, ${e_2^{(3)}}_{(A|B|D)}$, ${e_3^{(2)}}_{(D|AC)}$, ${e_3^{(2)}}_{(C|AD)}$, ${e_3^{(2)}}_{(A|CD)}$, ${e_3^{(3)}}_{(A|C|D)}$, ${e_4^{(2)}}_{(D|BC)}$, ${e_4^{(2)}}_{(C|BD)}$, ${e_4^{(2)}}_{(B|CD)}$, and ${e_4^{(3)}}_{(B|C|D)}$. Note that in this appendix, we denote the microscopic quantities $E_{\pi_e}^{(|\pi_e|)}$ by lowercase $e$, with the subscript number labeling the hyperedge. The partitions are taken with respect to subsystems of the parties within each hyperdege.

Similarly, the partition-reduction matrix associated with the $n=4$ 2-uniform-complete hypergraph is displayed in Table \ref{example42}. Since the vertex set remains unchanged, the macroscopic quantities are the same as those of the $n=4$ 3-uniform-complete hypergraph. However, with the hyperedges reduced to 2-edges in this case, the microscopic quantities consist simply of the measures corresponding to the bipartition of each edge, as indicated at the top of the columns.
\begin{table*}[ht]
\centering
\begin{tabular}{|c|c|c|c|c|c|c|}
\hline
 & ${e_1^{(2)}}_{(A|B)}$ & ${e_2^{(2)}}_{(A|C)}$ & ${e_3^{(2)}}_{(A|D)}$ & ${e_4^{(2)}}_{(B|C)}$ & ${e_5^{(2)}}_{(B|D)}$ & ${e_6^{(2)}}_{(C|D)}$ \\
\hline
$E^{(2)}_{(ABC|D)}$ & $0$&$0$&$1$&$0$&$1$&$1$ \\
\hline
$E^{(2)}_{(ABD|C)}$ & $0$&$1$&$0$&$1$&$0$&$1$ \\
\hline
$E^{(2)}_{(AB|CD)}$ & $0$&$1$&$1$&$1$&$1$&$0$ \\
\hline
$E^{(3)}_{(AB|C|D)}$ & $0$&$1$&$1$&$1$&$1$&$1$ \\
\hline
$E^{(2)}_{(B|ACD)}$ & $1$&$0$&$0$&$1$&$1$&$0$ \\
\hline
$E^{(2)}_{(AC|BD)}$ & $1$&$0$&$1$&$1$&$0$&$1$ \\
\hline
$E^{(3)}_{(AC|B|D)}$ & $1$&$0$&$1$&$1$&$1$&$1$ \\
\hline
$E^{(2)}_{(AD|BC)}$ & $1$&$1$&$0$&$0$&$1$&$1$ \\
\hline
$E^{(2)}_{(A|BCD)}$ & $1$&$1$&$1$&$0$&$0$&$0$ \\
\hline
$E^{(3)}_{(BC|A|D)}$ & $1$&$1$&$1$&$0$&$1$&$1$ \\
\hline
$E^{(3)}_{(AD|B|C)}$ & $1$&$1$&$0$&$1$&$1$&$1$ \\
\hline
$E^{(3)}_{(BD|A|C)}$ & $1$&$1$&$1$&$1$&$0$&$1$ \\
\hline
$E^{(3)}_{(CD|A|B)}$ & $1$&$1$&$1$&$1$&$1$&$0$ \\
\hline
$E^{(4)}_{(A|B|C|D)}$ & $1$&$1$&$1$&$1$&$1$&$1$ \\
\hline
\end{tabular}
\caption{The partition-reduction matrix associated with the $n=4$ 3-uniform-complete hypergraph.}
\label{example42}
\end{table*}

The equalities of the $n=4$ 3-uniform-complete hypergraph Sperner states are:
\begin{align}
   (1)&\ E^{(2)}_{(A|BCD)}+E^{(2)}_{(B|ACD)}+
E^{(2)}_{(C|ABD)}+E^{(2)}_{(D|ABC)}-E^{(2)}_{(AB|CD)}-E^{(2)}_{(AC|BD)}
-E^{(2)}_{(AD|BC)=0},\\
(2)&\ E^{(4)}_{(A|B|C|D)}-\frac{1}{3}\big(E^{(3)}_{(AB|C|D)}+E^{(3)}_{(AC|B|D)}+E^{(3)}_{(AD|B|C)}+E^{(3)}_{(BC|A|D)}+E^{(3)}_{(BD|A|C)}+E^{(3)}_{(CD|A|B)}\big)\notag\\&+\frac{1}{3}\big(E^{(2)}_{(AB|CD)}+E^{(2)}_{(AC|BD)}+E^{(2)}_{(AD|BC)}\big)=0,\\
(3)&\ E^{(3)}_{(AD|B|C)}+E^{(3)}_{(BC|A|D)}-E^{(3)}_{(AB|C|D)}-E^{(3)}_{(CD|A|B)}+E^{(2)}_{(AB|CD)}-E^{(2)}_{(AD|BC)}=0,\\
(4)&\ E^{(3)}_{(AD|B|C)}+E^{(3)}_{(BC|A|D)}-E^{(3)}_{(AC|B|D)}-E^{(3)}_{(BD|A|C)}+E^{(2)}_{(AC|BD)}-E^{(2)}_{(AD|BC)}=0,
\end{align}
where (1) and (2) appeared in \cite{Iizuka:2025ioc}.

Comparing with the 3-uniform-complete hypergraph case, the Sperner states associated with the 2-uniform-complete hypergraph obey four more equalities as follows.
\begin{align}
    (5)\ &48E^{(4)}_{(A|B|C|D)}+10\big(E^{(3)}_{(AB|C|D)}+E^{(3)}_{(AC|B|D)}+E^{(3)}_{(AD|B|C)}+E^{(3)}_{(BC|A|D)}+E^{(3)}_{(BD|A|C)}+E^{(3)}_{(CD|A|B)}\big)\notag\\&-28\big(E^{(2)}_{(AB|CD)}+E^{(2)}_{(AC|BD)}+E^{(2)}_{(AD|BC)}\big)-21\big(E^{(2)}_{(A|BCD)}+E^{(2)}_{(B|ACD)}+E^{(2)}_{(C|ABD)}+E^{(2)}_{(D|ABC)}\big)=0,\\
    (6)\ &E^{(3)}_{(BD|A|C)}+E^{(3)}_{(CD|A|B)}-E^{(3)}_{(AB|C|D)}-E^{(3)}_{(AC|B|D)}-E^{(2)}_{(A|BCD)}+E^{(2)}_{(D|ABC)}=0,\\
    (7)\ &E^{(3)}_{(AD|B|C)}+E^{(3)}_{(BD|A|C)}-E^{(3)}_{(AC|B|D)}-E^{(3)}_{(BC|A|D)}-E^{(2)}_{(C|ABD)}+E^{(2)}_{(D|ABC)}=0,\\
    (8)\ &E^{(3)}_{(BC|A|D)}+E^{(3)}_{(BD|A|C)}-E^{(3)}_{(AC|B|D)}-E^{(3)}_{(AD|B|C)}-E^{(2)}_{(A|BCD)}+E^{(2)}_{(B|ACD)}=0,
\end{align}
where the first equality is symmetric with respect to the permutation among $A$, $B$, $C$, and $D$, and the latter three equalities are asymmetric.

\end{document}